\newcommand{\cQ}{\mathcal{Q}}
\newcommand{\cP}{\mathcal{P}}
\newcommand{\LDLR}[3]{\mathsf{Adv}_{\leq #1}(#2, #3)}
\begin{document}
\title{The Quasi-Polynomial Low-Degree Conjecture is False}

\author{Rares-Darius Buhai\thanks{ETH Zurich. Email: \texttt{rares.buhai@inf.ethz.ch}.} \and Jun-Ting Hsieh\thanks{Carnegie Mellon University. Email: \texttt{juntingh@cs.cmu.edu}. Supported by NSF CAREER Award \#2047933.} 
\and Aayush Jain \thanks{Carnegie Mellon University. Email: \texttt{aayushja@cs.cmu.edu}. Supported by NSF CAREER Award \#2441647, and a Google Research Scholar Award.}\and Pravesh K. Kothari\thanks{Princeton University. Email: \texttt{kothari@cs.princeton.edu}. Supported by NSF CAREER Award \#2047933, NSF \#2211971, an Alfred P. Sloan Fellowship, and a Google Research Scholar Award.}}

\date{\today}

\maketitle
\begin{abstract}

There is a growing body of work on proving hardness results for average-case estimation problems by bounding the \emph{low-degree advantage} (LDA) --- a quantitative estimate of the closeness of low-degree moments --- between a \emph{null} distribution and a related \emph{planted} distribution. Such hardness results are now ubiquitous not only for foundational average-case problems but also central questions in statistics and cryptography.
This line of work is supported by the \emph{low-degree conjecture} of Hopkins~\cite{Hopkins18}, which postulates that a vanishing degree-$D$ LDA implies the absence of any noise-tolerant distinguishing algorithm with runtime $n^{\widetilde{O}(D)}$ whenever 1) the null distribution is product on $\{0,1\}^{\binom{n}{k}}$, and 2) the planted distribution is permutation invariant, that is, invariant under any relabeling $[n] \rightarrow [n]$.

In this paper, we disprove this conjecture. Specifically, we show that for any fixed $\varepsilon>0$ and $k\geq 2$, there is a permutation-invariant planted distribution on $\{0,1\}^{\binom{n}{k}}$ that has a vanishing degree-$n^{1-O(\varepsilon)}$ LDA with respect to the uniform distribution on $\{0,1\}^{\binom{n}{k}}$, yet the corresponding $\varepsilon$-noisy distinguishing problem can be solved in $n^{O(\log^{1/(k-1)}(n))}$ time. Our construction relies on algorithms for list-decoding for noisy polynomial interpolation in the high-error regime.

We also give another construction of a pair of planted and (non-product) null distributions on $\R^{n \times n}$ with a vanishing $n^{\Omega(1)}$-degree LDA while the largest eigenvalue serves as an efficient noise-tolerant distinguisher. 

Our results suggest that while a vanishing LDA may still be interpreted as evidence of hardness, developing a theory of average-case complexity based on such heuristics requires a more careful approach.

\end{abstract}

\thispagestyle{empty}
\setcounter{page}{0}

\newpage

\section{Introduction}
\newcommand{\Planted}{P}
\newcommand{\Null}{Q}

Central algorithmic challenges in a wide range of areas, from statistical estimation to cryptography, can be modeled as statistical \emph{signal detection} and \emph{recovery} problems.
In such problems, one must distinguish between an input drawn from a \emph{null} distribution and one drawn from a distribution with a \emph{planted} signal, ideally at the lowest possible signal strength.
The key research question is whether efficient distinguishers require a higher signal strength (the \emph{algorithmic threshold}) than inefficient ones (the \emph{statistical threshold}), a gap known as the \emph{information-computation gap}. Several foundational problems (e.g., planted clique, spiked Wigner and tensor models) are conjectured to exhibit information-computation gaps.
The modern research area of average-case complexity has developed a rich toolkit to identify such gaps. In addition to algorithm design and statistical estimation, this research direction has also provided a principled approach for new hardness assumptions underlying the security of cryptographic protocols~\cite{FOCS:MST03,STOC:AppBarWig10,C:AJLMS19,EC:JLMS19,EC:BBKK18,TCC:LomVai17,ABIKN23,BKR23,BJRZ24}. 

How should we build a rigorous theory of such  information-computation gaps? One strategy is to build a web of reductions starting from a few natural assumptions, paralleling fine-grained complexity (cf.~\cite{Williams18}). While achieving a fair amount of success in recent years \cite{BR13,HWX15,BBH18,BB19}, this approach is still limited to a restricted subset of problems.
This is because of the difficulty in designing reductions that transform a distribution on instances of one problem into the specific target distribution of another problem.\footnote{To highlight this difficulty, note that we do not know how to reduce refuting random $4$-SAT formulas with $n^{1+\Omega(1)}$-clauses to refuting random $3$-SAT formulas with $n^{1+\Omega(1)}$-clauses.}

\paragraph{Hardness against restricted algorithms} 
Most of the evidence for information-computation gaps has come from lower bounds against restricted families of algorithms (somewhat resembling lower bounds against weak circuit families).
A long sequence of works has focused on lower bounds against specific classes of spectral methods~\cite{MRZ16}, Markov chains~\cite{Jerrum92,CMZ25}, and convex programming hierarchies like the sum-of-squares (SoS) semidefinite programming relaxations.
In this context, the discovery of the \emph{pseudo-calibration} approach \cite{BHKKMP16} provided a heuristic that connects lower bounds against the sum-of-squares hierarchy to the \emph{low-degree advantage} (LDA) between a planted and a null distribution.

\begin{definition}[Low-Degree Advantage] \label{def:LDLR-intro}
The degree-$D$ advantage between two probability distributions $\Planted$ and $\Null$ on $\{0,1\}^N$ is defined as:
\begin{equation*}
    \LDLR{D}{\Planted}{\Null} \coloneqq \max_{f:\text{ deg-}D\text{ polynomial}} \frac{\E_{\Planted} f - \E_{\Null} f}{\sqrt{ \Var_{\Null}\, f }} \; .
\end{equation*}
\end{definition}

The low-degree advantage can be expressed in terms of the closeness between degree-$D$ moments of the two distributions.
It can also be interpreted as the degree-$D$ truncation of the likelihood ratio that captures the information-theoretic limits of statistical hypothesis testing (called \emph{low-degree likelihood ratio} (LDLR); see the survey \cite{KWB19}).
Unlike lower bounds against SoS relaxations, computing the LDA is often tractable in many settings.
The \emph{pseudo-calibration} conjecture~\cite{HKPRSS17} suggests that a vanishing LDA implies SoS lower bounds for an appropriately defined class of problems.
Indeed, starting with \cite{BHKKMP16}, a number of SoS lower bounds rely on the \emph{pseudo-calibration} technique where a lower bound witness is constructed from a pair of planted and null distributions, and showing that the LDA vanishes is a necessary first step~\cite{BHKKMP16,HKPRSS17,MRX20,GJJPR20,HK22} (though some recent works prove lower bounds based on planted distributions with non-vanishing LDA \cite{JPRTX21,JPRX23,KPX24}).

Starting with \cite{HS17}, vanishing LDA itself has been used as evidence of average-case hardness (see the survey \cite{KWB19}). Specifically, the \emph{low-degree heuristic} suggests that a distinguishing problem on (say) $n\times n$ matrix inputs is hard for $n^{o(D)}$ time algorithms if the degree-$D$ LDA asymptotically vanishes.
A flurry of follow-up work \cite{KWB19,GJW20,BH22,Wein22,Wein23,MW24,GJW24,Kunisky24,LG24,LZZ24,DMW25,li2025algorithmic,ding2025low,MW25,HM25}, not just in algorithm design but also statistics and cryptography, has used this heuristic to ascertain optimality of algorithms for new average-case problems or as supporting evidence for new hardness assumptions.
Further strands of research have developed analogs of the method for average-case estimation (as opposed to distinguishing) problems (e.g., \cite{SW22,MWZ23,MR4849263-Kunisky,LG24,MW24}).

\paragraph{Does vanishing LDA imply hardness?} The deluge of applications of the LDA method for hardness strongly motivates the investigation of whether and when (i.e., for what problems) a vanishing LDA predicts hardness. We know it doesn't, in general. Indeed, distinguishing between random linear equations vs those with a solution on $\mathbb{F}_2$ has $0$ LDA even at $\Omega(n)$-degree, but Gaussian elimination solves the problem in polynomial time. However, Gaussian elimination is brittle and fails even with a small amount of random noise (for e.g., corrupting a $o_n(1)$ fraction of equations; see \cite{DK22,ZSWB22} for similar examples).
In contrast, ``algorithms based on low-degree polynomials'' appear, informally, to tolerate such noise.
This led to the hypothesis that a vanishing LDA may imply failure of \emph{noise-tolerant} algorithms (analogous to the statistical query framework ~\cite{DBLP:conf/stoc/Kearns93,DBLP:conf/stoc/FeldmanGRVX13}), at least for problems with sufficient ``symmetry''.

\paragraph{The Low-Degree Conjecture}  Hopkins~\cite{Hopkins18} formulated a concrete hypothesis in his Ph.D. thesis that applies to all planted vs null distinguishing problems where the null and planted distributions are supported on $\{0,1\}^{n \choose k}$ --- viewed as $k$-tensors with all one-dimensional slices of size $n$ --- such that 1) the null is a product distribution on $\{0,1\}^{n \choose k}$ and 2) the planted distribution is permutation invariant. 
The \emph{low-degree conjecture}~\cite{Hopkins18} postulates that a vanishing degree-$D$ LDA implies the absence of an $n^{o(D)}$-time \emph{noise-tolerant} distinguishing algorithm whenever the two distributions satisfy the assumptions listed above.

Conjecture 2.2.4 of \cite{Hopkins18} formally applies to the special case of $D \sim \log^{1+\delta} n$ for $\delta>0$ with an informal general version appearing in Hypothesis 2.1.5 and a formal one inline in the discussion on Page 34. Several subsequent works have relied on the conjecture for all $D$.

\begin{conjecture}[The Low-Degree Conjecture, Hypothesis 2.1.5 and Conj 2.2.4 in \cite{Hopkins18}, Conj 2.1 in \cite{MR4345126-Ding21}, Conj 1.5 in \cite{MR4760356-Ding24}] \label{conj:low-deg-conj}
Fix $k \in \N$. Let $Q_n$ be the uniform distribution on $\{0,1\}^{n \choose k}$. Let $P_n$ be a distribution on $\{0,1\}^{n \choose k}$ that is invariant under the natural relabeling action of $S_n$.
If\linebreak ${\LDLR{D}{P_n}{Q_n} = O(1)}$, then for every fixed $\epsilon>0$, there is no $n^{D/\polylog(n)}$-time algorithm (for some $\polylog(n)$) that distinguishes between a sample from $T_{\epsilon} P_n$ and $Q_n$ with probability $1-o(1)$.
Here, $T_{\epsilon} P_n$ is the distribution obtained by drawing a sample from $P_n$ and replacing every coordinate with a uniformly random bit with probability $\epsilon$ independently. 
\end{conjecture}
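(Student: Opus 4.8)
The plan is to prove the conjecture by contradiction: suppose an $n^{D/\polylog(n)}$-time algorithm distinguishes $T_\epsilon P_n$ from $Q_n$ with probability $1-o(1)$, and derive a lower bound on $\LDLR{D}{P_n}{Q_n}$ that contradicts the hypothesis $\LDLR{D}{P_n}{Q_n} = O(1)$. Write $L_n = dP_n/dQ_n$ for the likelihood ratio and let $\langle\cdot,\cdot\rangle$ and $\|\cdot\|$ denote the $L_2(Q_n)$ inner product and norm. Because $Q_n$ is the uniform product measure on $\{0,1\}^{\binom{n}{k}}$, the $\epsilon$-noise channel $T_\epsilon$ is self-adjoint on $L_2(Q_n)$, fixes $Q_n$, preserves expectations, and acts diagonally in the Fourier basis with $\widehat{T_\epsilon g}(S) = (1-\epsilon)^{|S|}\widehat g(S)$. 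Consequently $d(T_\epsilon P_n)/dQ_n = T_\epsilon L_n$, and for any test $f : \{0,1\}^{\binom{n}{k}} \to [0,1]$ we have $\E_{T_\epsilon P_n} f - \E_{Q_n} f = \langle T_\epsilon f,\, L_n - 1\rangle$.

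Let $f\in[0,1]$ be the acceptance probability of the hypothesized distinguisher. Since $Q_n$ and $T_\epsilon P_n$ are both $S_n$-invariant (the noise acts i.i.d.\ on coordinates and so commutes with any relabeling $[n]\to[n]$), pre-composing the algorithm with a uniformly random relabeling --- polynomial extra time --- does not change its distinguishing advantage, so we may assume $f$ is $S_n$-invariant. After possibly replacing $f$ by $1-f$, distinguishing with probability $1-o(1)$ means $\E_{Q_n}f = o(1)$ and $\E_{T_\epsilon P_n}f = 1-o(1)$; hence $\Var_{Q_n}f \le \E_{Q_n}f = o(1)$, so the centered test $\tilde f := f - \E_{Q_n}f$ obeys $\|\tilde f\| = o(1)$ while $\langle T_\epsilon\tilde f,\, L_n - 1\rangle = \E_{T_\epsilon P_n}f - \E_{Q_n}f = 1-o(1)$. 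Now decompose $T_\epsilon\tilde f = (T_\epsilon\tilde f)^{\le D} + (T_\epsilon\tilde f)^{>D}$ into its degree-$\le D$ and degree-$>D$ parts. Since $(T_\epsilon\tilde f)^{\le D}$ is orthogonal to $L_n^{>D}$, has norm at most $\|\tilde f\|$ (truncation and $T_\epsilon$ are contractions), and $\|L_n^{\le D} - 1\| = \LDLR{D}{P_n}{Q_n}$, Cauchy--Schwarz gives $\langle (T_\epsilon\tilde f)^{\le D},\, L_n^{\le D}-1\rangle \le \|\tilde f\|\cdot\LDLR{D}{P_n}{Q_n} = o(1)\cdot O(1) = o(1)$. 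Therefore the high-degree residual must carry essentially all of the correlation:
\[
\big\langle\,(T_\epsilon\tilde f)^{>D},\; L_n^{>D}\,\big\rangle \;=\; 1 - o(1) .
\]
To finish the proof it would suffice to show this residual inner product is in fact $o(1)$ --- or merely bounded away from $1$ --- for every $S_n$-invariant $\tilde f$ arising in this way.

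This last step is the one we expect to be the main obstacle. The Fourier weight of $(T_\epsilon\tilde f)^{>D}$ is exponentially damped, $\|(T_\epsilon\tilde f)^{>D}\| \le (1-\epsilon)^D\|\tilde f\| = o(1)$; but $\|L_n^{>D}\|$ --- the high-degree tail of the $\chi^2$-divergence between $P_n$ and $Q_n$ --- is not controlled, and for the construction below is super-polynomially large, so the Cauchy--Schwarz bound $|\langle(T_\epsilon\tilde f)^{>D},\, L_n^{>D}\rangle| \le \|(T_\epsilon\tilde f)^{>D}\|\cdot\|L_n^{>D}\|$ is vacuous. Making the residual small would require a genuinely structural input: roughly, that permutation-invariance of $L_n$ precludes any high-degree, noise-robust ``codeword-like'' structure in the likelihood ratio onto which an $n^{o(D)}$-time algorithm could project to obtain distinguishing advantage supported above degree $D$. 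We do not believe such a statement is true --- and indeed the remainder of this paper exhibits a permutation-invariant $P_n$ with vanishing low-degree advantage whose likelihood ratio carries exactly this kind of structure, efficiently and noise-robustly recoverable via list-decoding for noisy polynomial interpolation in the high-error regime. Thus the residual is $1-o(1)$ rather than $o(1)$, the reduction to a low-degree test breaks down, and the conjecture is false.
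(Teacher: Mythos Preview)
The statement is a conjecture that the paper \emph{disproves}; there is no proof of it to compare against. Your proposal correctly arrives at the same conclusion --- the conjecture is false --- but it does not actually establish this. The Fourier-analytic decomposition you present is sound and cleanly isolates where a naive attempt to \emph{prove} the conjecture would stall: the high-degree residual $\langle (T_\epsilon\tilde f)^{>D}, L_n^{>D}\rangle$ is not controlled by the hypothesis $\|L_n^{\le D}-1\|=O(1)$, since $\|L_n^{>D}\|$ can be arbitrarily large. But the failure of one proof strategy is not a disproof. Your final paragraph asserts that a counterexample exists and gestures at ``the remainder of this paper'' and at list-decoding for noisy polynomial interpolation, yet the proposal itself contains no construction, no verification of $S_n$-invariance, no LDA computation, and no distinguishing algorithm with its runtime and correctness analysis.

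The paper's actual content is precisely what your proposal omits: an explicit $P_n$ built by writing Reed--Solomon evaluation pairs $(\alpha_j,p(\alpha_j))$ into a small block of an otherwise uniformly random symmetric tensor and then applying a uniformly random $S_n$-permutation; a proof that $(m{-}1)$-wise uniformity of Reed--Solomon codewords forces $\LDLR{D}{P_n}{Q_n}=0$ for $D=n^{1-O(\epsilon)}$; and an $n^{O(\log^{1/(k-1)} n)}$-time distinguisher that enumerates the possible locations of the embedded block, runs Guruswami--Sudan list decoding, and is shown to succeed with probability $1-\exp(-n^{1-O(\epsilon)})$ under both $Q_n$ and $T_\epsilon P_n$ via explicit union bounds. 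Without these pieces, your write-up is motivation for why a counterexample might exist, not the counterexample itself.
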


Here, $S_n$-invariance means that for any permutation $\sigma \in S_n$, the distribution induced by mapping $M[i_1,\ldots i_k]$ to $M[\sigma(i_1),\ldots, \sigma(i_k)]$ is identical to the distribution of $M$. A large symmetry group is intended to preclude algorithms that try to exploit the presence of a small collection of special rows/columns, while noise tolerance is supposed to rule out ``algebraic" algorithms that are intuitively thought to be brittle (e.g., Gaussian elimination and lattice basis reduction).

The conditions of $S_n$-invariance and the null being product\footnote{The $S_n$-invariance forces any product null to have essentially identically distributed entries.} 
may appear restrictive, but they are satisfied by a host of well-studied distinguishing problems, including planted clique/dense subgraphs~\cite{BHKKMP16}, community detection~\cite{ding2025low,LZZ24,DBLP:journals/corr/abs-2502-14407}, and sparse PCA. Indeed, the intuitions for the truth of the conjecture arose from studying such problems.

To the best of our knowledge, \Cref{conj:low-deg-conj} imposes the most stringently formulated conditions on the pair of distributions for a vanishing LDA to imply hardness. In fact, substantial research effort has focused on \emph{expanding} the theory via variants of the conjecture suggesting that a vanishing LDA implies hardness even when the assumptions on $P_n$ and $Q_n$ in \Cref{conj:low-deg-conj} are not precisely met. A few examples include Conjecture 1.6 in~\cite{MR4345126-Ding21}, Conjecture 2.3 in \cite{moitra2023precise}, Conjecture 2 in~\cite{arpino2023statistical}, Conjecture 1.5 in~\cite{MR4760356-Ding24}, Conjecture 1.4 in~\cite{Kunisky24}, Conjecture 2.2 in~\cite{li2025algorithmic}, and Conjecture 1.3 in~\cite{ding2025low}.

Over time, the low-degree conjecture has been applied to justify using vanishing LDA (and related notions) as evidence of computational hardness. An abridged list of applications includes planted clique~\cite{BHKKMP16}, dense subgraphs~\cite{HKPRSS17}, sparse PCA~\cite{HKPRSS17,MR4760356-Ding24,dKNS20}, sparse clustering~\cite{LWB22}, stochastic block model \cite{HS17,BBKMW21,LG24,Kunisky24,JKTZ23,LZZ24} graph matching~\cite{MWXY24,DDL23,CDGL24}, planted dense cycles~\cite{MWZ23}, detecting geometry in random graphs \cite{BB23,BB24}, spiked Wigner and Wishart models~\cite{HS17,DBLP:conf/innovations/BandeiraKW20,KWB19,BBKMW21,MW24,BDT24}, planted submatrix~\cite{SW22} and variants with multiple communities \cite{RSWY23,DBLP:journals/corr/abs-2306-06643}, tensor PCA~\cite{DBLP:conf/nips/Chood21},
planted dense subhypergraph \cite{DBLP:journals/corr/abs-2304-08135}, planted hyperloops~\cite{BKR23}, sparse regression~\cite{DBLP:conf/nips/BandeiraAHSWZ22}, group testing~\cite{DBLP:journals/corr/abs-2206-07640}, Gaussian mixture models~\cite{DBLP:conf/colt/BrennanBH0S21,DBLP:journals/corr/abs-2207-04600,lyu2022optimalestimationcomputationallimit}, Gaussian graphical models~\cite{DBLP:conf/colt/BrennanBH0S21}, learning truncated Gaussians~\cite{DBLP:conf/colt/DiakonikolasKPZ24}, non-planted optimization problems such maximum independent set in sparse graphs~\cite{GJW24,Wein22,DBLP:journals/corr/abs-2501-06427} and hypergraphs~\cite{DBLP:journals/corr/abs-2404-03842}, maximum clique in $G(n,1/2)$~\cite{Wein22}, $k$-SAT~\cite{DBLP:conf/focs/BreslerH21}, spin glass optimization problems~\cite{GJW24}, and perceptron models~\cite{DBLP:journals/corr/abs-2203-15667}.

\paragraph{Is the Low-Degree Conjecture true?} 
Given the growing applications of the low-degree method, \Cref{conj:low-deg-conj}, if true, presents the exciting possibility of building a unified and principled theory of average-case complexity, at least under the assumptions on $P_n$ and $Q_n$.

At present, proving the conjecture appears beyond the reach of existing techniques, even modulo standard assumptions in worst-case or average-case complexity theory. On the other hand, no counter-example to \Cref{conj:low-deg-conj} has been found so far.
Prior works have explored and established the role of noise tolerance and symmetry in the truth of the conjecture.
Holmgren and Wein \cite{HW21} observed that any efficient unique decoding algorithm for an error correcting code in $\mathbb{F}_2^n$ with a large dual distance implies a counter-example to the conjecture \emph{if one were to drop the permutation-invariance condition}.

They also refuted the version of the conjecture that Hopkins wrote in the setting where the domain $\Omega=\R$ and $Q_n$ is a standard Gaussian distribution, by exploiting the fact that one can encode a large amount of information in a single uncorrupted real number.
However, they observed that their technique no longer gives a counter-example if one demands noise-tolerance in a way that is more natural, in retrospect, when $Q_n$ is Gaussian. Specifically, they noted that the right analog of the Boolean noise operator in \Cref{conj:low-deg-conj} should be the Ornstein-Uhlenbeck operator that adds a small independent Gaussian to every entry, as opposed to corrupting only a small constant fraction of the entries as in the original proposal in \cite{Hopkins18}.

Similarly, the work of \cite{ZSWB22,DK22} shows algorithms based on lattice reductions that can solve problems in regimes where the LDA vanishes.
However, these algorithms, like Gaussian elimination, are not noise tolerant and fail under a vanishing amount of random noise.
The failure to disprove \Cref{conj:low-deg-conj} so far has served as an argument in favor of the conjecture.
In light of such attempts and the importance of the conjecture, refuting or gaining more evidence for it was pointed out as a major research direction in a recently concluded workshop on low-degree polynomial methods in average-case complexity~\cite{AIM24}.

In this work, we show that \Cref{conj:low-deg-conj} is false. We also give another example of a problem over matrices in $\R$, in which $Q_n$ is rotational invariant but not a product distribution, and the $n^{\Omega(1)}$-degree LDA asymptotically vanishes while the largest eigenvalue serves as a distinguisher.
We describe both results in detail below.

\subsection{Our Results}

Our first example satisfies all the conditions of \Cref{conj:low-deg-conj}, has $0$ $n^{1-O(\epsilon)}$-degree LDA, while a quasi-polynomial time algorithm succeeds in solving the distinguishing problem.

\begin{theorem}[\Cref{conj:low-deg-conj} is false; see \Cref{thm:reed-solomon}]
\label{thm:main-reed-solomon}
For every $\epsilon>0$ and integer $k \geq 2$, there is a distribution $P_n$ on $\{0,1\}^{n \choose k}$ that satisfies the conditions of \Cref{conj:low-deg-conj} such that $\LDLR{D}{P_n}{Q_n} = 0$ for $D=n^{1-O(\epsilon)}$ while there is a $n^{O(\log^{1/(k-1)} (n))}$-time distinguisher for $T_{\epsilon} P_n$ and $Q_n$ that succeeds with probability $1-o_n(1)$.
\end{theorem}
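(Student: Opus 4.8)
The plan is to realize $P_n$ as a symmetrized, Booleanized Reed--Solomon code. Fix a prime power $q$ polynomial in $n$ and large enough that $\binom{n}{k}\le q$, fix an injection $e\mapsto\alpha_e$ of $\binom{[n]}{k}$ into $\mathbb{F}_q$, and fix a balanced $\mathbb{F}_2$-linear functional $b\colon\mathbb{F}_q\to\{0,1\}$. An ``un-normalized'' planted sample is drawn by sampling a uniformly random polynomial $g$ over $\mathbb{F}_q$ of degree less than $t:=\lceil n^{1-c\epsilon}\rceil$ (for a constant $c=c(k)$ chosen below) and outputting $\big(b(g(\alpha_e))\big)_{e\in\binom{[n]}{k}}$; the actual $P_n$ is the law of this vector after relabeling the ground set by a uniformly random $\sigma\in S_n$. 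Then $P_n$ is $S_n$-invariant by construction, and $Q_n$ is the uniform (product) distribution, so the hypotheses of \Cref{conj:low-deg-conj} are met. Several variants of the indexing step $e\mapsto\alpha_e$ are possible --- e.g.\ letting $\alpha_e$ be a symmetric function of per-vertex field elements, so that the edge-consistency checks become ``local'' --- and the choice governs the running time; I describe the cleanest.

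\emph{Vanishing LDA.} Since $Q_n$ is uniform on $\{0,1\}^{\binom nk}$, we have $\LDLR{D}{P_n}{Q_n}=0$ iff every Fourier coefficient $\widehat{P_n}(\mathcal S)=\E_{P_n}\prod_{e\in\mathcal S}(-1)^{x_e}$ with $1\le|\mathcal S|\le D$ vanishes, i.e.\ iff any $\le D$ coordinates are exactly jointly uniform under $P_n$. For a fixed relabeling $\sigma$, the bits $\{x_e:e\in\mathcal S\}$ are the $b$-images of the values of $g$ at the $|\mathcal S|$ distinct points $\{\alpha_{\sigma(e)}:e\in\mathcal S\}$; since $|\mathcal S|\le D\le t$ and any $\le t$ evaluation functionals on the space of degree-$<t$ polynomials are linearly independent (equivalently, the dual distance of this Reed--Solomon code exceeds $t$), the tuple $(g(\alpha_{\sigma(e)}))_{e\in\mathcal S}$ is \emph{exactly} uniform on $\mathbb{F}_q^{|\mathcal S|}$, and applying the balanced $\mathbb{F}_2$-linear map $b$ coordinatewise keeps it exactly uniform on $\{0,1\}^{|\mathcal S|}$. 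Hence the corresponding Fourier coefficient of the un-normalized distribution vanishes for every $\sigma$, and averaging over $\sigma$ preserves $|\mathcal S|$, so $\widehat{P_n}(\mathcal S)=0$. Thus $\LDLR{D}{P_n}{Q_n}=0$ with $D=t=n^{1-\Theta(\epsilon)}$.

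\emph{The distinguisher.} The key point is that $T_\epsilon$ is far more damaging at the field-symbol level: each $\alpha$-symbol is carried by $\Theta(\log n)$ bits, so after independent $\epsilon$-corruption a symbol survives intact only with probability $q^{-\Theta(\epsilon)}=n^{-\Theta(\epsilon)}$. Distinguishing $T_\epsilon P_n$ from $Q_n$ therefore reduces to list-decoding this Reed--Solomon code from agreement only $n^{-\Theta(\epsilon)}$, i.e.\ \emph{noisy polynomial interpolation in the high-error regime}: recover a short list of degree-$<t$ polynomials consistent with an $n^{-\Theta(\epsilon)}$ fraction of the (noisy) evaluations, and accept iff some candidate agrees with the observed sample on noticeably more than half the coordinates --- which holds under $T_\epsilon P_n$ for the planted $g$ by a Chernoff bound, but, by a union bound over the short list, fails under $Q_n$. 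Choosing $c$ so that $t=n^{1-c\epsilon}$ lies below the list-decoding radius for the $n^{-\Theta(\epsilon)}$-agreement problem makes the interpolation step go through (via Guruswami--Sudan-type decoding or any high-error list-decoder) with list size $n^{O(\epsilon)}$. The excess over polynomial time --- and the $\log^{1/(k-1)}n$ in the exponent --- comes entirely from reconciling the decoder with the unknown relabeling $\sigma$: the decoder does not know which coordinate carries which evaluation point $\alpha_{\sigma(e)}$, and recovering enough of this correspondence to run the interpolation (by brute-forcing $O(\log^{1/(k-1)}n)$ ``anchor'' assignments, each ranging over $\mathbb{F}_q$, and propagating) costs $n^{O(\log^{1/(k-1)}n)}$, after which decoding and checking are polynomial. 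The overall runtime is $n^{O(\log^{1/(k-1)}n)}$ with success probability $1-o_n(1)$.

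\emph{Main obstacle.} The two requirements pull against each other, and overcoming this tension is the heart of the argument. A vanishing LDA up to degree $D=n^{1-\Theta(\epsilon)}$ forces the underlying linear code to have dual distance exceeding $D$, hence (Singleton applied to the dual) dimension at least $D$, hence --- being essentially $\mathrm{RS}$ of dimension $t\approx D$ --- relative minimum distance tending to $0$. Such a code is hopeless for \emph{unique} decoding from constant-fraction noise; this is precisely why the Holmgren--Wein-type counterexamples~\cite{HW21} must drop $S_n$-invariance (which, together with the null being product, pins us to essentially identically distributed, $D$-wise-independent coordinates, hence to high dual distance and low distance), and it is why we are driven into the high-error list-decoding regime. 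The technical core is thus to choose the Reed--Solomon parameters and the indexing $e\mapsto\alpha_e$ so that simultaneously (i) the effective $n^{-\Theta(\epsilon)}$-agreement interpolation problem is within reach of high-error list-decoding, (ii) the output list stays short enough that the consistency test separates $T_\epsilon P_n$ from $Q_n$, and (iii) recovering enough of the random relabeling to invoke the list-decoder costs only $n^{O(\log^{1/(k-1)}n)}$; the remaining ingredients --- the exact-zero Fourier computation, the Chernoff bound for the planted candidate, and standard Reed--Solomon list-size bounds --- are routine.
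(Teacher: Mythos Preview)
Your vanishing-LDA argument is fine: for a random degree-$<t$ polynomial, any $t$ evaluations at fixed distinct points are jointly uniform, and pushing through the balanced functional $b$ preserves this. The gap is entirely in the distinguisher, and it is a real gap, not a routine detail.

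In your encoding, every coordinate $e$ carries a \emph{single bit} $b(g(\alpha_e))$, with the evaluation points $\alpha_e$ deterministic and not transmitted. Two things go wrong. First, Guruswami--Sudan list decoding needs full $\mathbb{F}_q$-symbols $(\alpha,g(\alpha))$, not one-bit traces; your sentence ``each $\alpha$-symbol is carried by $\Theta(\log n)$ bits'' does not match your own construction (nothing carries $\alpha$ at all), and recovering a degree-$n^{1-\Theta(\epsilon)}$ polynomial over $\mathbb{F}_q$ from noisy single-bit evaluations is a different problem that you do not address. Second, even granting an oracle list-decoder, you cannot feed it: after the $S_n$-relabeling the decoder does not know which evaluation point sits at which coordinate, and guessing $O(\log^{1/(k-1)}n)$ ``anchor'' vertex labels pins down the evaluation points for only $\binom{O(\log^{1/(k-1)}n)}{k}=\polylog(n)$ edges --- orders of magnitude below the $n^{1-\Theta(\epsilon)}$ evaluations needed even information-theoretically to determine $g$. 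The vague ``propagating'' step has nothing to propagate with: you would need to already know $g$ to test a guess of $\sigma$ on further vertices, and you would need to already know a large chunk of $\sigma$ to learn $g$.

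The paper's construction avoids both issues by a different design choice: it confines all the code content to a tiny subtensor (indices of side $\approx(\log q)^{1/(k-1)}$ along the first $k-1$ modes) and, crucially, stores the \emph{random} evaluation points $\alpha_j$ alongside the values $\beta_j=g(\alpha_j)$ as full $\mathbb{F}_q$-symbols in the same slice. Thus the unknown permutation along the last mode merely shuffles $(\alpha_j,\beta_j)$ pairs and is harmless, and the only thing to brute-force is the identity of the $O(\log^{1/(k-1)}n)$ special indices --- exactly the advertised $n^{O(\log^{1/(k-1)}n)}$ cost --- after which standard Guruswami--Sudan applies directly to noisy $\mathbb{F}_q$-pairs. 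The key idea you are missing is precisely this: make the code permutation-resilient by packaging each evaluation point together with its value, and hide the package location (cheap to guess) rather than the point-to-position correspondence (expensive to guess).
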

It is not hard to show (see \Cref{rem:Boolean-to-Gaussian}) that any counter-example for the Boolean setting (such as above) implies a similar counter-example for the case when $Q_n$ is the standard Gaussian distribution and $T_{\epsilon}$ is the Gaussian Ornstein-Uhlenbeck noise operator (suggested in the refined version of the low-degree conjecture for Gaussian $Q_n$ in \cite{HW21}). 

We note that a concurrent work \cite{concurrent2025} shows that \Cref{conj:low-deg-conj} is true for $k=1$ as in a vanishing degree-$O(\log n)$ LDA implies the failure of \emph{all} distinguishers for the corresponding noisy distinguishing problem. Thus, taken together, our results give a complete resolution of the Boolean alphabet case of \Cref{conj:low-deg-conj}.

\paragraph{Polynomial-time distinguisher for rectangular inputs}
The $S_n$-symmetry requirement makes \Cref{conj:low-deg-conj} quite restrictive.
First, it only applies to square symmetric matrices (or more generally, tensors with all slices of the same dimension).
Moreover, in the case of Boolean domain, any $S_n$-symmetric product distribution (for the null) is essentially a distribution over $n$-vertex undirected hypergraphs where each hyperedge is sampled i.i.d.

It is natural to postulate a generalization that applies to rectangular matrices (or tensors with slices of unequal dimensions). In such a case, the symmetry requirement must be reformulated (e.g., for a bipartite graph with left vertex set of size $m$ and right vertex set of size $n$, the relabeling should not send a left vertex to a right one).
This setting already arises in several applications of the low-degree heuristic, including spiked Wishart models~\cite{DBLP:conf/innovations/BandeiraKW20, dKNS20, MR4345126-Ding21, BDT24} and bipartite planted clique models~\cite{BKS23}.

In \Cref{rem:rectangular,thm:reed-solomon-partite}, we show that under the natural definition of symmetry in the rectangular setting,
our example yields a \emph{polynomial-time distinguisher}.
This formally refutes the heuristic that $\polylog(n)$-degree indistinguishability rules out polynomial-time (noise-tolerant) distinguishing algorithms, albeit not satisfying the $S_n$-symmetry required in \Cref{conj:low-deg-conj}.

\paragraph{Noisy polynomial interpolation} 


Our counter-example is based on the well-studied \emph{noisy polynomial interpolation} for list-decoding (modified to fit the specifications of \Cref{conj:low-deg-conj}):

\begin{center}\emph{Given $n$ evaluations of a univariate degree-$m$ polynomial $p$ on $F_q$ on random inputs corrupted independently with probability $1-1/n^{O(\epsilon)}$, find a list of polynomials that includes $p$ of size $\poly(n)$.}\end{center}

We note that such a problem arises naturally in many applications, including in cryptography (e.g., in analyzing the security of \emph{McEliece cryptosystems} \cite{McEliece1978,CouvreuL22,DBLP:conf/tcc/BoyleHW19,DaoJ24,mceliece-rm-codes,SS92-attack-mceliece-grs}). In fact, McEliece cryptosystems involve decoding randomly permuted noisy evaluations, similar to the permutations involved in our counter-example. We note that while the low-degree conjecture has not been used in the context of such cryptosystems so far, it has been invoked in the security analyses of other cryptographic protocols \cite{ABIKN23,BKR23,BJRZ24}.

\paragraph{Error correction under noise and permutation}
We will describe the construction below by formulating a general problem of constructing an efficient list-decodable error-correcting code where codewords are viewed as $k$-fold tensors in $n$ dimensions.
We are interested in tolerating an adversary that, in addition to the usual corruptions, can also apply an arbitrary permutation to relabel the coordinates of the tensor. Any efficient code that satisfies such a condition  immediately yields a counter-example to \Cref{conj:low-deg-conj}. This problem appears to be independently interesting and closely related to other works in coding theory, including the recent work on \emph{graph codes} \cite{alon2024graph}. We will show how noisy polynomial interpolation can give an efficient construction of such a permutation-resilient, list-decodable error-correcting code and thus obtain our counter-example.



\begin{definition}[Permutation-resilient, efficiently list-decodable codes] \label{def:permutation-resilient-code}
Let $E:\{0,1\}^m \rightarrow \{0,1\}^{n \choose k}$ be a (possibly randomized) encoding map where we view the codewords as symmetric tensors of order $k$. We say that $E$ is \emph{permutation-resilient, efficiently list-decodable} if, given $y = E(x)$ obtained by 1) flipping every entry of $y$ with probability $\epsilon$ independently, 2) applying the relabeling action of a uniformly random $\sigma \in S_n$ on $y$, one can efficiently construct a list of $\poly(n,k)$ messages guaranteed to contain $x$ with high probability over $E$, the corruptions and the permutations.
\end{definition}

We then make the following observation: 
\begin{observation}
\label{obs:obs1}
Suppose there is an encoding map as in \Cref{def:permutation-resilient-code} such that the distribution of $E(x)$ for a random $x$ is $D$-wise uniform and the list-decoding algorithm tolerates a constant rate $\epsilon>0$ of corruptions and runs in time $n^{o(D/\polylog(n))}$. Then, \Cref{conj:low-deg-conj} is false. 
\end{observation}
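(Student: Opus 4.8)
The plan is to let the planted distribution be the law of $\sigma\cdot E(x)$, where $x\sim\mathrm{Unif}(\{0,1\}^m)$, $E(x)\in\{0,1\}^{\binom{n}{k}}$ is produced using the internal randomness of $E$, and $\sigma\sim\mathrm{Unif}(S_n)$ acts by relabeling the $n$ indices of the underlying symmetric $k$-tensor; take $Q_n=\mathrm{Unif}(\{0,1\}^{\binom{n}{k}})$ as in \Cref{conj:low-deg-conj}. Two facts are then immediate. First, $P_n$ is $S_n$-invariant: applying a fixed $\tau\in S_n$ to a sample turns $\sigma\cdot E(x)$ into $(\tau\sigma)\cdot E(x)$, and $\tau\sigma$ is again uniform on $S_n$, so the law is unchanged. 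Second, the Boolean noise operator $T_\epsilon$ commutes with relabeling, since it acts independently and identically on every coordinate; hence $T_\epsilon P_n$ is exactly the law of ``take the codeword $E(x)$, flip each entry independently with probability $\epsilon$, then apply a uniformly random relabeling'', which is precisely the corruption model of \Cref{def:permutation-resilient-code}.

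\paragraph{Vanishing LDA.} Next I would argue $\LDLR{D}{P_n}{Q_n}=0$ by showing that $P_n$ is $D$-wise uniform, i.e.\ any $D$ of its coordinates are jointly uniform. This suffices: the degree-$\le D$ monomials span the degree-$\le D$ polynomials and each such monomial depends on at most $D$ coordinates, so if $P_n$ is $D$-wise uniform then all its degree-$\le D$ moments coincide with those of $Q_n$ and the numerator in \Cref{def:LDLR-intro} vanishes identically. By hypothesis the law of $E(x)$ is $D$-wise uniform; relabeling coordinates maps $D$-wise uniform distributions to $D$-wise uniform distributions, and a mixture (here, over $\sigma$) of $D$-wise uniform distributions is again $D$-wise uniform. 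Hence $P_n$ is $D$-wise uniform.

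\paragraph{The fast distinguisher.} On input $z\in\{0,1\}^{\binom{n}{k}}$, the distinguisher runs the permutation-resilient list-decoder on $z$; this takes time $n^{o(D/\polylog(n))}$, strictly below the $n^{D/\polylog(n)}$ barrier posited by \Cref{conj:low-deg-conj}. If $z\sim T_\epsilon P_n$, then by the commutation observation and \Cref{def:permutation-resilient-code} the decoder outputs, with probability $1-o(1)$, a list of $\poly(n,k)$ messages containing $x$; for the decoders we use (e.g.\ for noisy polynomial interpolation) it also exposes a relabeling $\pi$ with $\mathrm{dist}(\pi\cdot z,\,E(x))\le(\epsilon+o(1))\binom{n}{k}$ (alternatively one post-processes each returned message by re-encoding). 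The distinguisher re-encodes each returned message and accepts iff some candidate is within relative distance $\epsilon+o(1)$ of $z$ under the accompanying relabeling. If instead $z\sim Q_n$, then for any fixed codeword $c$ and any fixed relabeling $\pi'$, a Chernoff bound gives $\mathrm{dist}(\pi'\cdot z,\,c)\ge(1/2-o(1))\binom{n}{k}$ except with probability $\exp(-\Omega(\binom{n}{k}))$; union bounding over all codeword--relabeling pairs (there are at most $2^{O(m)}\cdot n!$ of them) and using $k\ge 2$ together with $m=o(\binom{n}{k})$, which holds for the codes at hand, with probability $1-o(1)$ no candidate passes the test and the distinguisher rejects. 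Thus it distinguishes $T_\epsilon P_n$ from $Q_n$ with probability $1-o(1)$, contradicting \Cref{conj:low-deg-conj}.

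\paragraph{Main obstacle.} The delicate point is the soundness of the distinguisher on a uniform input: the ``is $z$ close, up to relabeling, to a decoded codeword?'' test must be both efficient and provably fail on $Q_n$. Efficiency is why I would rely on the decoder itself (or a light post-processing) to supply the aligning relabeling rather than searching over $S_n$ --- equivalently, one may use that natural list-decoders return an empty list when no codeword lies within the decoding radius up to relabeling. The ``provably fail'' part needs the crude union bound over codeword--relabeling pairs, which is why one wants the code to be low-rate, a harmless restriction for the constructions used here. Everything else --- $S_n$-invariance, closure of $D$-wise uniformity under relabeling and mixing, commutation of $T_\epsilon$ with relabeling, and the reduction of $\LDLR{D}{P_n}{Q_n}=0$ to $D$-wise uniformity --- is routine.
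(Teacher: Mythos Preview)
Your approach matches the paper's: define $P_n$ as the law of $\sigma\cdot E(x)$ for uniform $x$ and $\sigma$, argue $S_n$-invariance and that $D$-wise uniformity survives relabeling and mixing (hence $\LDLR{D}{P_n}{Q_n}=0$), and invoke the list-decoder as the distinguisher. The paper's own justification is in fact terser than yours---it simply asserts the decoder ``must necessarily fail with high probability on $Q_n$'' without spelling out any verification test---treating the observation as a motivating sketch whose rigorous form appears only in the specific Reed-Solomon analysis of Section~2.

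One wrinkle in your more explicit verification step that you do not fully flag: since \Cref{def:permutation-resilient-code} allows $E$ to be randomized, ``re-encoding'' a returned message with fresh internal coins need not reproduce the original codeword, and correspondingly the union bound over ``all codeword--relabeling pairs'' is not obviously bounded by $2^{O(m)}\cdot n!$. In the paper's Reed-Solomon instantiation this is a non-issue because the encoder's randomness (the evaluation points $\alpha_j$) is embedded in the codeword itself, so the consistency check ``does $\beta_j=p(\alpha_j)$ for enough $j$?'' requires neither re-encoding nor recovering the permutation. Your ``Main obstacle'' paragraph is right that this is where the generic argument leans on properties of the concrete decoder rather than \Cref{def:permutation-resilient-code} alone.
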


To see why, we choose $P_n$ by 1) choose a uniformly random permutation $\sigma$, 2) choose a uniformly random $x \in \{0,1\}^m$, and output $\sigma (E(x))$. Then, $P_n$ is clearly $S_n$-invariant and has a vanishing degree-$D$ LDA with respect to the uniform distribution $Q_n$. On the other hand, one can simply apply the list-decoding algorithm and note that such an algorithm must necessarily fail with high probability on $Q_n$ to obtain a distinguisher.

In \Cref{sec:reed-solomon}, we show how to construct such efficiently list-decodable permutation-resilient codes based on Reed-Solomon codes in its list decoding regime.  Our code is list-decodable in time $n^{O(\log n)}$ for $k=2$ and $n^{O(\log^{1/k-1}n)}$ for a general $k \in \N$. The vanishing LDA follows easily from the dual code having a large distance while our efficient distinguisher uses a high-error list-decoding algorithm (e.g.,~\cite{Sudan97,Guruswami-Sudan}).

\paragraph{Low-degree polynomials cannot exactly compute the eigenvalues} In the second part of this paper, we give a different construction of $P_n$ and $Q_n$ on $n \times n$ matrices in $\R$ such that $P_n, Q_n$ satisfy permutation invariance (in fact, even the stronger property of \emph{rotation} invariance) while the top eigenvalue of the input matrix serves as a polynomial-time distinguisher that succeeds with high probability even in an arguably natural noise model in the setting.

Our example in this case is also simple and is based on a carefully designed eigenvalue distribution of $n \times n$ matrices with eigenvectors being the columns of an independent and random orthogonal matrix. In our construction, we note that $Q_n$ is not a product distribution, and thus, this construction does not refute \Cref{conj:low-deg-conj}. However, as discussed before, there are many applications of the low-degree conjecture where the null distribution does not satisfy the product requirement (see e.g.,~\cite{Wein23,RSWY23,KVWX23,BB24}). We note that under $Q_n$, the correlation between any two (or a constant number of) entries of the matrix-valued random variable is $o_n(1)$.

\begin{theorem}[Informal \Cref{thm:eigenvalue}]
\label{thm:main-eigenvalue}
    There are rotational-invariant distributions $Q_n$ and $P_n$ over matrices $\R^{n \times n}$ such that
    $\LDLR{D}{P_n}{Q_n} \leq o(1)$ for $D = \wt{\Omega}(n^{1/3})$ while there is a $\poly(n)$-time algorithm that distinguishes between $Q_n$ and a noisy $P_n$.
\end{theorem}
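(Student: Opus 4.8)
The plan is to construct $Q_n$ and $P_n$ as spectral distributions: fix an independent Haar-random orthogonal matrix $U \in O(n)$ and set $M = U \Lambda U^\top$, where the eigenvalue vector $\Lambda = \mathrm{diag}(\lambda_1, \dots, \lambda_n)$ is drawn from a distribution $\mu_Q$ under the null and $\mu_P$ under the planted model. Because $U$ is Haar-random, both resulting matrix distributions are automatically rotation-invariant (hence permutation-invariant), and conjugating by a fixed orthogonal matrix leaves the law unchanged. The key design choice is to make $\mu_Q$ and $\mu_P$ agree on their first $D \approx \widetilde\Omega(n^{1/3})$ moments of the empirical spectral distribution (equivalently, $\E[\mathrm{tr}(M^j)]$ matches for all $j \le D$, and more generally all mixed trace moments of degree $\le D$ match), while $\mu_P$ plants a single outlier eigenvalue of magnitude $\lambda_{\max} \approx C$ that is well-separated from the bulk. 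Concretely, take the bulk under both models to be (a discretization of) the semicircle law on $[-2,2]$, and under $P_n$ replace one bulk eigenvalue by $\lambda_{\max} = 3$ (say); the matching-moments condition is enforced by a small perturbation of the bulk in $P_n$, chosen so that the $O(1/n)$ mass moved to the outlier is compensated at the level of the first $D$ moments — this is a Chebyshev/linear-programming argument on measures with few prescribed moments, and $D = \widetilde O(n^{1/3})$ is the threshold at which such a compensation remains feasible with $\Theta(1/n)$-size perturbations.

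For the low-degree bound, I would use the rotation-invariance to reduce the analysis to symmetric functions of the eigenvalues. Any degree-$D$ polynomial test $f(M)$ has expectation, under a spectral distribution $U \Lambda U^\top$, equal to $\E_U \E_\Lambda f(U\Lambda U^\top)$; averaging over Haar $U$ first (Weingarten calculus) turns $\E_\Lambda f(U \Lambda U^\top)$ into a linear combination of products of normalized power sums $\frac{1}{n}\sum_i \lambda_i^{j}$ with $\sum_a j_a \le D$, with coefficients that do not depend on $\mu$. Since $\mu_Q$ and $\mu_P$ are built to match all such joint moments up to degree $D$, we get $\E_{P_n} f = \E_{Q_n} f$ exactly for every degree-$D$ polynomial, so $\LDLR{D}{P_n}{Q_n} = 0$ (in the cleanest version) or $o(1)$ once one accounts for the $O(1/\mathrm{poly}(n))$ slack from discretization; either way the numerator in Definition~\ref{def:LDLR-intro} is negligible while $\Var_{Q_n} f$ is bounded below by a constant for the natural normalization. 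The pairwise-correlation-$o_n(1)$ claim follows from the same moment computation at degree $2$.

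For the distinguisher: under a natural noise model — e.g.\ add an independent $O(1/\sqrt n)$-scaled GOE matrix, or flip to the Gaussian setting via the Ornstein–Uhlenbeck operator as in \cite{HW21} — the bulk spectrum is perturbed by $o(1)$ in operator norm (standard spectral perturbation / free probability), so $\lambda_{\max}(Q_n + \text{noise}) \le 2 + o(1)$ while $\lambda_{\max}(P_n + \text{noise}) \ge 3 - o(1)$. The test ``accept iff $\lambda_{\max} \ge 2.5$'' then distinguishes the two with probability $1 - o(1)$, and it is computable in $\mathrm{poly}(n)$ time. The main obstacle I anticipate is the moment-matching construction: one must exhibit explicit signed measures (or genuine probability measures differing by $O(1/n)$) on the real line that agree on $D = \widetilde\Theta(n^{1/3})$ moments yet differ by having an $\Omega(1)$ outlier, and simultaneously verify that the induced \emph{joint} eigenvalue moments (not just single power sums) match — the latter requires either choosing $\mu_P, \mu_Q$ to be i.i.d.\ product-type laws on the $\lambda_i$'s (so joint moments factor) or carefully tracking the Weingarten expansion. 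Getting the exponent $1/3$ right, rather than some smaller power, is where the quantitative heart of the argument lies: it should come from balancing the number of matched moments against the $1/n$ magnitude of the outlier's contribution, analogous to how $\Theta(n^{1/3})$ arises as the moment-matching threshold in spiked random matrix lower bounds.
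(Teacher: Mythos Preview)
Your high-level framework --- rotation-invariant distributions $M = U\diag(\lambda)U^\top$, with the LDA analysis reduced to symmetric functions of the eigenvalues --- matches the paper's. But the specific construction you propose does not work, and the mechanism driving the $n^{1/3}$ exponent is not moment matching. With the bulk supported on $[-2,2]$ and an outlier at $3$ carrying mass $1/n$, the $j$-th power-sum moment of the planted law exceeds that of the null by roughly $\tfrac{1}{n}(3^j - O(2^j))$. Any compensating signed perturbation $\delta$ of the bulk, being supported on $[-2,2]$, satisfies $\bigl|\int x^j\,d\delta\bigr| \le 2^j \|\delta\|_{TV}$, so matching the $D$-th moment forces $\|\delta\|_{TV} \gtrsim (3/2)^D/n$. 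Once $D \gg \log n$ this exceeds $1$ and no probability measure can realize it. A constant-gap outlier thus admits moment matching only to degree $O(\log n)$; the ``Chebyshev/LP balancing'' you invoke yields $n^{1/3}$ only if the outlier sits at distance $o(1)$ from the bulk edge, in which case your distinguisher $\lambda_{\max} \ge 2.5$ collapses.

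The paper does not match moments at all. The bulk $\mu_\gamma$ is supported on $[-1,0]$ with an atom of mass $1-\gamma$ at $0$ (so rank $\approx \gamma n = \widetilde O(1)$), and the planted outlier is $\lambda^* = \Theta(\log^3 n / n)$, a $1/\mathrm{poly}(n)$ distance past the right edge. After reducing to tests of the form $\sum_i q(\lambda_i)$ with univariate $q$ of degree $\le D$, one expands $q$ in the (shifted) Legendre basis on $[-1,0]$ and bounds $|q(\lambda^*)-q(0)|$ directly; the resulting constraint $D^3\sqrt{\gamma\log n/n}=o(1)$ is what produces $D=\widetilde\Theta(n^{1/3})$. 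The distinguisher checks only the \emph{sign} of the top eigenvalue. Your GOE noise of norm $\Theta(1)$ would wipe out a $1/\mathrm{poly}(n)$ signal; the paper instead takes noise to be a convex combination with an independent null draw, which has rank $\widetilde O(1)$, and controls its quadratic form in the planted direction via Hanson--Wright. Both the LDA bound and the noise-robustness hinge on this tiny-outlier, low-rank, null-as-noise regime rather than the one you sketched.
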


This result goes against the conventional wisdom that spectral methods, at least of the simple kind that compute the largest eigenvalue, are ``captured by $\polylog(n)$-degree polynomials''.\footnote{In fact, several papers informally comment that ``$O(\log n)$-degree polynomials capture spectral methods''. See for e.g., \cite{GJW20,Wein23,DMW25,HM25}.}
This intuition is based on the fact that for any symmetric matrix $A \in \R^{n \times n}$, $\|A\|_2 \leq \tr(A^{2k})^{1/2k} \leq n^{1/2k} \|A\|_2$.
Thus, with $k = \omega(\log n)$, we have $n^{1/2k} \leq 1 + O(\frac{1}{k}\log n)$, and $\tr(A^{2k})$ (a degree-$2k$ polynomial of $A$) approximates the norm up to a $(1+o(1))$ factor.

Our two distributions in \Cref{thm:main-eigenvalue}, after a shift of eigenvalues, are over matrices of norm $1$ and $1+\lambda^*$ respectively, where $\lambda^*$ is chosen to be $\frac{1}{\poly(n)}$.
Thus, $\tr(A^{2k})$ fails to distinguish even if $k = n^c$ for some constant $c$.
This is our main intuition for \Cref{thm:main-eigenvalue}.

\begin{remark}[Noise Model]
    Despite the various applications of the low-degree conjecture in settings where $Q_n$ is not a product distribution, there is no precise formulation of the noise model under which noise-tolerant algorithms are conjectured to be ruled out.
    In \Cref{thm:main-eigenvalue}, we consider the noise model that adds a scaled copy of an independent draw from the null distribution.
    This aligns with the Ornstein-Uhlenbeck noise model in the setting where $Q_n$ is Gaussian.
    In our case, however, this noise changes both our planted and null distributions.
    Thus, we additionally prove that the LDA is vanishing even for the noisy versions of the null and planted distributions.
\end{remark}

Here, we give a brief overview of \Cref{thm:main-eigenvalue}.
The null distribution $Q_n$ is supported on negative semidefinite matrices.
Specifically, we sample $\lambda_1,\dots, \lambda_n$ independently from some distribution $\mu$ over $[-1,0]$, and output $U \diag(\lambda) U^\top$ where $U$ is a random rotation matrix.\footnote{In our proof, for convenience, we set $U$ to be a matrix with Gaussian entries instead. The difference is negligible.}
For the planted distribution $P_n$, we do the same for $\lambda_1,\dots,\lambda_{n-1}$, but set $\lambda_n = \lambda^* > 0$.
Intuitively, due to the rotational invariance, we only need to consider distinguishers that are symmetric functions of $\lambda$.
Suppose we set $\lambda^* = \frac{1}{\poly(n)}$, then low-degree functions of $\lambda$ should not be able to detect the small positive eigenvalue.

We also need to show that after adding noise to a matrix $M \sim P_n$, the matrix still has a positive eigenvalue.
Since $P_n, Q_n$ are not product distributions, there is no standard notion of a noise operator.
As a concrete example, we look at one natural definition: for a sample $M \sim P_n$, the noisy output is $N' = (1-\eps) M + \eps M_0$ for $M_0$ sampled from the null $Q_n$.

It suffices to show that for the eigenvector $u$ where $u^\top M u = \lambda^*$, we have $u^\top M' u = (1-\eps)\lambda^* + \eps \cdot u^\top M_0 u > 0$.
Since $u$ is a random vector, $u^\top M_0u$ will be concentrated around (a scaling of) $\tr(M_0)$.
This requires our matrices to be low rank.
Specifically, we require $M_0$ to have rank roughly $\wt{O}(\lambda^* n)$ --- i.e., the distribution $\mu$ outputs $0$ with probability $1-\wt{O}(\lambda^*)$.
See \Cref{sec:top-eigenvalue} for more details.

\subsection{Discussion}

In retrospect, it is perhaps unsurprising that a single heuristic, such as vanishing LDA, fails to characterize efficient algorithms, even when we impose additional requirements such as noise tolerance and symmetry. Still, the low-degree conjecture has resisted attempts at refutation since its introduction in 2017, despite the wide range of applications, especially in the last five years. 

\paragraph{What does a vanishing LDA mean for computational hardness?}
For well-studied problems such as finding planted cliques in random graphs or refuting random constraint satisfaction problems, all algorithmic efforts have failed to improve the best-known algorithms from more than two decades ago. Although one may still hesitate to conjecture\footnote{To paraphrase a famous line, \emph{algorithms find a way}.} computational hardness at the thresholds, the failure to find a better algorithm despite decades of effort is, by itself, perhaps as strong an evidence of hardness as any. But how should we interpret vanishing LDA based hardness for a \emph{new and relatively unexplored} problem, such as those that may arise in cryptography \cite{ABIKN23,BKR23,BJRZ24}? In such cases, our work suggests significant caution.  

\paragraph{Improving our counter-example}   One could improve our counter-example somewhat and find one where the distinguisher runs in polynomial as opposed to $n^{\log^{\delta}(n)}$ time for an arbitrarily small $\delta$ as in our current construction. A natural avenue for this is building an efficient permutation-invariant, list-decodable code with large dual distance as in \Cref{obs:obs1}.
As noted earlier, for the rectangular generalization of Hopkins' conjecture, our construction already yields a counter-example with a polynomial-time distinguisher (see~\Cref{rem:rectangular,thm:reed-solomon-partite}).

Relatedly, finding more examples of algorithmic techniques that circumvent \Cref{conj:low-deg-conj} is also an important research direction, as it suggests natural avenues for surpassing lower bounds via the LDA method for specific problems. 

\paragraph{Reformulating the conjecture?} It is natural to ask if certain additional natural conditions on the pair of distributions $P_n$ and $Q_n$ could lead to a potentially viable version of \Cref{conj:low-deg-conj} while still satisfied by well-studied average-case problems.

The low-degree conjecture (and the connections to other restricted algorithmic frameworks such as the overlap gap property~\cite{DBLP:journals/corr/abs-1904-07174}, statistical query model~\cite{DBLP:conf/stoc/Kearns93,DBLP:conf/stoc/FeldmanGRVX13}, and the Franz-Parisi criterion~\cite{DBLP:conf/nips/BandeiraAHSWZ22} from statistical physics) have fueled a recent excitement for a principled theory of average-case complexity based on such heuristics and connections \cite{DBLP:conf/colt/BrennanBH0S21}. We believe that the development of such a theory will benefit from precisely stated conjectures (such as Hopkins's conjecture), rigorous investigations of their truth, and rigorous characterizations of what algorithmic techniques their predictions apply to. 

\paragraph{Concrete implications of vanishing LDA?}
In general, it appears unlikely to us that we will be able to formulate a single tractable heuristic that captures all efficient algorithms.
However, it may still be feasible to rigorously characterize the class of algorithms that such heuristics can help rule out.
Such attempts will be valuable for both algorithm designers and cryptographers who seek provable hardness against restricted classes of algorithms.
So far, there is little work in this direction in the context of the vanishing LDA heuristic, and it was suggested as a major research direction in a recent workshop on the topic~\cite{AIM24}.
Notably, a concurrent work~\cite{concurrent2025} makes concrete progress on this front.

\section{Low-Degree Conjecture vs Noisy Polynomial Interpolation}
\label{sec:reed-solomon}

In this section, we prove \Cref{thm:main-reed-solomon}. We let $T_{\eps}$ denote the standard Boolean noise operator. For any distribution $\calD$ over $\{0,1\}^N$, $T_\eps \calD$ is the distribution where (1) we sample $X \sim \calD$, then (2) independently for each coordinate of $X$, we replace it with a uniform sample from $\{0,1\}$ with probability $\eps$.

\begin{theorem}
\label{thm:reed-solomon}
    Fix any integer $k\geq 2$ and small enough $\eps > 0$.
    Let $\cQ_n$ be the uniform distribution over symmetric $k$-tensors in $(\{0,1\}^n)^{\otimes k}$.
    Then, there exists an $S_n$-symmetric distribution $\cP_n$ over symmetric $k$-tensors in $(\{0,1\}^n)^{\otimes k}$ such that
    \begin{itemize}
        \item \textbf{Degree-$n^{1-O(\eps)}$ indistinguishable:}
        $\LDLR{D}{\cP_n}{\cQ_n} = 0$ for $D = n^{1-6\eps}$.
        \item \textbf{Distinguishing algorithm after noise:} there is an algorithm $\calA$ that runs in time $n^{O(\log^{1/(k-1)} n)}$ such that
        $\Pr_{M \sim \cQ_n}[\calA(M) = 0], \Pr_{M \sim T_\eps \cP_n}[\calA(M)=1] \geq 1- \exp(-n^{1-O(\epsilon)})$.
    \end{itemize}
\end{theorem}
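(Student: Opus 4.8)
The plan is to build $\cP_n$ via a permutation-resilient, $D$-wise uniform, efficiently list-decodable code as in Observation~\ref{obs:obs1}, instantiated with Reed--Solomon codes in the list-decoding regime. Fix a prime power $q = n^{\Theta(1)}$ (say $q \in [n^3, n^4]$) with $\F_q \supseteq \F_2$, and set $m = n^{1-O(\eps)}$ so that $D = m-1$ can be taken to be $n^{1-6\eps}$. The message is a uniformly random $u \in \F_q^m$, encoding the polynomial $p(x) = \sum_{i<m} u_i x^i$. To sample $M \sim \cP_n$: draw $\alpha_1,\dots,\alpha_n \in \F_q$ independently and uniformly, set $y_j = p(\alpha_j)$ (and re-randomize $y_j$ to a uniform element of $\F_q$ if $\alpha_j$ coincides with an earlier $\alpha_i$), so that every $m$-subset of the $(\alpha_j,y_j)$ has $y$-coordinates jointly uniform in $\F_q^m$. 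Identify each $\F_q$ element with a vector in $\{0,1\}^{\lceil \log_2 q\rceil}$; write the $2\lceil\log_2 q\rceil$ bits of $(\alpha_j,y_j)$ into a designated $(k-1)$-dimensional slice of size about $(\log_2 q)^{1/(k-1)} \times \cdots \times (\log_2 q)^{1/(k-1)}$ indexed by column $j$ (for $k=2$ this is just the first $2\log_2 q$ rows of column $j$), fill the rest of the tensor with independent uniform bits, and finally apply a uniformly random $\sigma \in S_n$ to relabel the $n$ coordinates. The $S_n$-invariance of $\cP_n$ is immediate from the last step.

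For the \textbf{degree-$D$ indistinguishability}, the claim is $\LDLR{D}{\cP_n}{\cQ_n} = 0$. Since the LDA equals $0$ iff the degree-$\le D$ moments of $\cP_n$ match those of $\cQ_n$ (equivalently, $\cP_n$ is $D$-wise uniform as a distribution on $\{0,1\}^{\binom nk}$), it suffices to show that any $D$ bits of $M$ are jointly uniform. Conditioned on $\sigma$ and on the $\alpha_j$'s, a set of $D \le m-1$ bit-coordinates touches at most $m-1$ columns $j$; the $\alpha$-bits and the padding bits are uniform and independent by construction, and the $y$-bits involved come from at most $m-1$ distinct points $\alpha_{j}$ (using the re-randomization to handle collisions), hence are jointly uniform by the Reed--Solomon interpolation property. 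Averaging over $\sigma$ and the $\alpha_j$'s preserves uniformity, so $\cP_n$ is $D$-wise uniform and the LDA vanishes.

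For the \textbf{distinguisher after noise}, on input $M$ the algorithm $\calA$ iterates over all $n^{O(\log^{1/(k-1)} n)}$ possible placements of the data slice (there are $\binom{n}{O(\log^{1/(k-1)}n)}$-many choices of which coordinates form the slice in each mode, which is $n^{O(\log^{1/(k-1)}n)}$ total), reads off the candidate pairs $(\alpha_j, y_j)$ for $j \in [n]$, and runs the Guruswami--Sudan list-decoder~\cite{Guruswami-Sudan,Sudan97} for degree-$<m$ Reed--Solomon codes, which recovers every polynomial agreeing with the received word on more than $\sqrt{nm}$ points and outputs a list of size $\poly(n)$; $\calA$ accepts iff some guess produces a nonempty list all of whose members, when re-encoded and re-padded and re-permuted appropriately, are consistent with $M$ on a $1-O(\eps)$-ish fraction (one just checks agreement is far above chance). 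In the completeness case $M \sim T_\eps \cP_n$: for the correct slice guess, each of the $2\lceil\log_2 q\rceil$ bits of $(\alpha_j,y_j)$ survives noise with probability $(1-\eps)^{2\lceil\log_2 q\rceil} = q^{-O(\eps)} = n^{-O(\eps)}$, so in expectation $n^{1-O(\eps)}$ of the pairs are uncorrupted, and by a Chernoff bound at least $\tfrac12 n^{1-O(\eps)}$ are uncorrupted except with probability $\exp(-n^{1-O(\eps)})$; choosing the $O(\eps)$ in $m=n^{1-O(\eps)}$ with a comfortable margin ensures $\tfrac12 n^{1-O(\eps)} \gg \sqrt{nm}$, so $p$ is in the list and $\calA$ accepts. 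Uniqueness/consistency: since the uncorrupted agreement is far above the $\approx 1/q$ rate expected from any fixed wrong codeword on fresh random evaluation points, a union bound over the $\poly(n)$-size list rules out a spurious accept here. In the soundness case $M \sim \cQ_n$: $M$ is fully uniform, so for \emph{every} slice guess the decoded list is with overwhelming probability either empty or contains only codewords whose re-encoding agrees with the uniform bits of $M$ on at most a $\tfrac1q + o(1)$ fraction — far below the acceptance threshold — and a union bound over the $n^{O(\log^{1/(k-1)}n)}$ guesses and the $\poly(n)$ list members (each a failure event of probability $\exp(-\Omega(n))$) shows $\calA$ rejects except with probability $\exp(-n^{1-O(\eps)})$.

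The main obstacle is the \textbf{soundness / false-accept analysis}: one must verify that, across all $n^{O(\log^{1/(k-1)}n)}$ slice guesses simultaneously, no list-decoded codeword spuriously "explains" a uniformly random tensor. This is delicate because the list-decoder's output depends on $M$, so the candidate codewords are not independent of the bits they are checked against; the fix is to observe that for any fixed received word the list has size $\poly(n)$, enumerate over received words / use the fact that the decoder is a deterministic function and the $\Omega(n)$ "padding"/consistency bits it is checked against are uniform and independent of the decoder's internal consistency check, then Chernoff + union bound over guesses (the $\exp(-\Omega(n))$ per-event bound easily absorbs the $n^{O(\log^{1/(k-1)}n)}$ and $\poly(n)$ factors). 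A secondary technical point is bookkeeping the re-randomization of repeated $\alpha_j$'s so that it simultaneously (i) keeps $\cP_n$ exactly $D$-wise uniform and (ii) only costs an $o(1)$ fraction of the $n$ symbols in the completeness direction, which follows since with $q = n^{\Omega(1)}$ only $O(n^2/q) = o(n)$ collisions occur with high probability.
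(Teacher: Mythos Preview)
Your overall strategy matches the paper's: a Reed--Solomon based planted distribution, $D$-wise uniformity for the LDA bound, and a guess-then-list-decode distinguisher. The parameter choices differ slightly (you take $q\in[n^3,n^4]$ and $n$ evaluation points where the paper takes $q=\Theta(n)$ and $\lfloor n/2\rfloor$), but this does not affect the argument.

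There is, however, a genuine gap in your soundness (null-case) analysis. Your acceptance criterion --- ``re-encode, re-pad, re-permute and check consistency with $M$ on a $1-O(\eps)$ fraction'' --- is not well-defined: the padding bits are fresh uniform randomness in \emph{both} $\cP_n$ and $\cQ_n$, so checking against them cannot distinguish, and you do not know the permutation beyond your slice guess. What you actually need (and what your parenthetical and your soundness sketch implicitly use) is the simple threshold test the paper uses: accept if some decoded polynomial $p$ agrees with at least $n'=\Theta(n^{1-6\eps})$ of the extracted pairs, i.e.\ $\beta_j=p(\alpha_j)$ for $\ge n'$ indices $j$. Your worry that ``the list-decoder's output depends on $M$'' is well-placed, but neither of your proposed fixes lands: enumerating received words is $q^n$-large, and the ``independent padding bits'' are useless for the reason above. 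The clean fix, which is what the paper does, is to \emph{bypass the decoder entirely} in the union bound: the event ``$\calA$ outputs $1$'' is contained in the event ``there exists a slice choice, a set of $n'$ columns, and a degree-$<m$ polynomial $p$ such that the $\beta$-half equals $p$ of the $\alpha$-half on those columns.'' Under $\cQ_n$ the $\beta$-bits are uniform given the $\alpha$-bits, so for any fixed slice, column-set, and $p$ this has probability $q^{-n'}$; now union bound over $n^{O(\log^{1/(k-1)}q)}$ slices, $\binom{n}{n'}\le n^{O(\eps n')}$ column sets, and all $q^m$ polynomials (not just the list). Since $n'\gg m$ and $\eps\log n\ll\log q$, the total is $q^{-\Omega(n')}=\exp(-n^{1-O(\eps)})$.

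A smaller issue: in your indistinguishability paragraph you write ``conditioned on $\sigma$ and on the $\alpha_j$'s, the $\alpha$-bits\ldots are uniform,'' which is literally false (conditioning on the $\alpha_j$'s determines the $\alpha$-bits). The correct structure, as in the paper's Lemma~\ref{lemma:rs-indist}, is: fix $\sigma$; the $\alpha$-bits are uniform; then \emph{further} conditioned on the $\alpha_j$'s, the $y$-bits touched (at most $m-1$ of them) are uniform by Reed--Solomon $m$-wise independence plus your resampling of collisions.
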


\begin{remark}[Boolean counter-example translates to a Gaussian counter-example] \label{rem:Boolean-to-Gaussian}
One can naturally extend any Boolean counter-example to the setting where $Q_n$ is the distribution of a symmetric tensor with independent Gaussian entries. To do this, we sample a $k$-tensor $T$ from $P_n$ or $Q_n$, treat it as a tensor with $\pm 1$-entries (instead of $0$-$1$), and let $T'$ be obtained by multiplying each entry of $T$ with the absolute value of an independent standard Gaussian. We thus get a pair of distributions on $\R^{n\choose k}$, and further, the null distribution is that of a symmetric tensor with independent Gaussian entries. The proof that LDA vanishes directly extends to this variant. We can also extend the distinguishing algorithm by first taking the entry-wise sign of the input tensor to obtain a Boolean tensor and then applying the algorithm for the Boolean case. The noise-tolerance analysis extends naturally by observing that the classical Sheppard's Lemma implies that a noise rate of $\epsilon$ for the Gaussian Ornstein-Uhlenbeck noise operator translates into a noise rate of $O( \sqrt{\epsilon})$ for the Boolean setting obtained by taking entry-wise signs.
\end{remark}

\begin{remark}[Rectangular version of Hopkins' conjecture and polynomial-time distinguisher] \label{rem:rectangular}
Suppose we have rectangular tensors in $\{0,1\}^{n_1 \times n_2 \times \cdots \times n_k}$.
Then, we need to consider $(S_{n_1} \times S_{n_2} \times \cdots \times S_{n_k})$-symmetry --- that is, invariance under independent permutations of indices along each mode.
Such distributions can be viewed as distributions over $k$-partite hypergraphs.
In this case, we can construct two distributions that are degree-$n^{1-O(\eps)}$ indistinguishable but have a distinguishing algorithm with $\poly(n)$ runtime.
\end{remark}

\begin{theorem}
\label{thm:reed-solomon-partite}
    Fix any small enough $\eps > 0$.
    Let $\cQ_n$ be the uniform distribution over $3$-tensors in $\{0,1\}^{\sqrt{\log n} \times \sqrt{\log n} \times n}$.
    Then, there exists an $(S_{\sqrt{\log n}} \times S_{\sqrt{\log n}} \times S_n)$-symmetric distribution $\cP_n$ over $3$-tensors in $\{0,1\}^{\sqrt{\log n} \times \sqrt{\log n} \times n}$ such that
    \begin{itemize}
        \item \textbf{Degree-$n^{1-O(\eps)}$ indistinguishable:}
        $\LDLR{D}{\cP_n}{\cQ_n} = 0$ for $D = n^{1-6\eps}$.
        \item \textbf{Distinguishing algorithm after noise:} there is an algorithm $\calA$ that runs in time $\poly(n)$ such that
        $\Pr_{M \sim \cQ_n}[\calA(M) = 0], \Pr_{M \sim T_\eps \cP_n}[\calA(M)=1] \geq 1- \exp(-n^{1-O(\epsilon)})$.
    \end{itemize}
\end{theorem}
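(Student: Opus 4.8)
\emph{Proof plan.} The plan is to apply \Cref{obs:obs1} with a Reed--Solomon code, the point being that in the rectangular layout the two short modes are \emph{exactly} the right size to store one field-element pair per slice of the long mode, so --- unlike in the square case of \Cref{thm:reed-solomon} --- the decoder never has to search for a $\polylog(n)$-sized block inside a length-$n$ mode; it only has to brute-force permutations of a $\polylog(n)$-sized slice, at cost $n^{o(1)}$. Fix a power of two $q$ with $n \le q < 2n$, identify $\F_q \cong \{0,1\}^{\log_2 q}$, fix a bijection $\iota\colon \F_q^2 \to \{0,1\}^{2\log_2 q}$, set $m := \lfloor n^{1-6\eps}\rfloor+1$ (so $D := m-1 = n^{1-6\eps}$ up to rounding), and let the two short modes have size $s := \lceil\sqrt{2\log_2 q}\rceil = \Theta(\sqrt{\log n})$. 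To sample $M\sim\cP_n$: draw a uniform degree-$<m$ polynomial $p$ over $\F_q$; draw $\alpha_1,\dots,\alpha_n\in\F_q$ i.i.d.\ uniform and put $y_j := p(\alpha_j)$ unless $\alpha_j\in\{\alpha_1,\dots,\alpha_{j-1}\}$, in which case $y_j$ is a fresh uniform element; form $M^\circ\in\{0,1\}^{s\times s\times n}$ whose $j$-th slice holds $\iota(\alpha_j,y_j)$ in a fixed set of $2\log_2 q$ coordinates and i.i.d.\ uniform bits elsewhere; output $M$ obtained from $M^\circ$ by permuting the three modes by independent uniform $\sigma_1,\sigma_2\in S_s$, $\sigma_3\in S_n$, which yields the required $S_s\times S_s\times S_n$-invariance.

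\emph{Vanishing LDA.} I would show $\cP_n$ is $D$-wise uniform on $\{0,1\}^{s\times s\times n}$; since on the Boolean cube every degree-$D$ polynomial is a linear combination of squarefree monomials, each touching $\le D$ coordinates and with an expectation determined by the corresponding marginal, $D$-wise uniformity forces $\E_{\cP_n}f = \E_{\cQ_n}f$ for all such $f$ and hence $\LDLR{D}{\cP_n}{\cQ_n}=0$. As permutations preserve $D$-wise uniformity it suffices to treat $M^\circ$: a set of $\le D=m-1$ coordinates touches a set $J\subseteq[n]$ with $|J|\le m-1$ of slices. Condition on $\alpha_1,\dots,\alpha_n$ and write $J=J_1\sqcup J_2$ with $J_1$ the first-occurrence indices. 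Padding bits are independent uniform; the $y_j$ for $j\in J_2$ are fresh independent uniforms; and for $j\in J_1$ the points $\alpha_j$ are distinct with $|J_1|<m$, so $(c_0,\dots,c_{m-1})\mapsto(p(\alpha_j))_{j\in J_1}$ is a surjective $\F_q$-linear map, hence sends the uniform coefficient vector to the uniform distribution on $\F_q^{|J_1|}$, independently of the $\alpha$'s, of the $J_2$-randomness, and of the padding. So the selected coordinates are uniform. (Notably, this argument does not even use $m\le q$.)

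\emph{Distinguishing after noise.} On input $M$, for each of the $(s!)^2 = 2^{O(\sqrt{\log n}\log\log n)} = n^{o(1)}$ pairs $(\tau_1,\tau_2)\in S_s\times S_s$, undo $(\tau_1,\tau_2)$ on the two short modes and, for every $j\in[n]$, read the pair $(\tilde\alpha_j,\tilde y_j)\in\F_q^2$ obtained by applying $\iota^{-1}$ to the $2\log_2 q$ data coordinates of the $j$-th slice; the permutation of the length-$n$ mode is harmless since these $n$ pairs are handed to the decoder as an unordered multiset. Then run Sudan's list decoder \cite{Sudan97} (or \cite{Guruswami-Sudan}) on the $n$ pairs with agreement threshold $t := \lceil\sqrt{2mn}\rceil+1 = \Theta(n^{1-3\eps})$; this outputs, in $\poly(n)$ time, a list of size $O(\sqrt{n/m})=\poly(n)$ containing all degree-$<m$ polynomials agreeing with $\ge t$ of the pairs. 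Output $1$ iff some list is non-empty. Under $\cQ_n$, after undoing any $(\tau_1,\tau_2)$ the pairs are i.i.d.\ uniform on $\F_q^2$ (here $\iota$ is a genuine bijection since $q$ is a power of two), so a fixed polynomial agrees with $\mathrm{Bin}(n,1/q)$ of them, and a union bound over the $q^m\le(2n)^m$ polynomials and the $n^{o(1)}$ choices of $(\tau_1,\tau_2)$ shows every list is empty except with probability $n^{o(1)}(2n)^m(e/t)^t = \exp(-\Theta(n^{1-3\eps}\log n))$, using $n^{1-3\eps}\gg n^{1-6\eps}$. Under $T_\eps\cP_n$, for the correct $(\tau_1,\tau_2)$ a slice is read as the true pair $(\alpha_j,p(\alpha_j))$ whenever $\alpha_j$ is a first occurrence and all $2\log_2 q$ of its data bits come through $T_\eps$ uncorrupted; since each bit stays correct with probability $1-\eps/2$, this has probability $(1-\eps/2)^{2\log_2 q}=n^{-\Theta(\eps)}\ge n^{-2\eps}$ for $\eps$ small, independently across slices. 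As there are $\ge 0.3n$ first occurrences except with probability $e^{-\Omega(n)}$, a Chernoff bound yields $\Omega(n^{1-2\eps})>t$ such clean pairs (at distinct $\tilde\alpha_j$) except with probability $\exp(-n^{1-O(\eps)})$, so $p$ lies in the list and it is non-empty; thus $\calA$ succeeds on both distributions with probability $1-\exp(-n^{1-O(\eps)})$.

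\emph{Main obstacle.} The delicate step is the parameter balancing above: $q$ must sit in a Goldilocks zone, $q=\Theta(n)$. It has to be large enough that $n$ random evaluation points still leave $\Omega(n)$ distinct values (so the re-randomized ``repeat'' slices do not drown the decoder), yet small enough that each slice carries only $\approx 2\log_2 n$ bits, so that an $n^{-\Theta(\eps)}$ fraction of slices survive $T_\eps$ entirely intact --- and this fraction, though vanishing, must still comfortably beat the $\approx\sqrt{2mn}/n=n^{-3\eps}$ fraction of clean symbols that Reed--Solomon list decoding requires. That is exactly what the slack between $D=n^{1-6\eps}$ and $\sqrt{2mn}=\Theta(n^{1-3\eps})$ buys. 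Everything else --- the $D$-wise uniformity bookkeeping, the union bound under $\cQ_n$, and the two Chernoff bounds --- is routine.
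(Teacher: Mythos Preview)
Your proposal is correct and follows essentially the same route as the paper: encode Reed--Solomon evaluation pairs in the $\Theta(\sqrt{\log n})\times\Theta(\sqrt{\log n})$ slices, get $D$-wise uniformity from the code's $(m-1)$-wise independence (together with the re-randomization of repeated $\alpha$'s), and distinguish by brute-forcing the $(\Theta(\sqrt{\log n})!)^2=n^{o(1)}$ short-mode permutations and list-decoding with Guruswami--Sudan. The only cosmetic differences from the paper's construction are that you pack $(\alpha_j,y_j)$ via a single bijection $\iota$ rather than two disjoint sub-blocks, use all $n$ long-mode indices rather than $\lfloor n/2\rfloor$, and take the agreement threshold $t=\Theta(n^{1-3\eps})$ rather than $\Theta(n^{1-6\eps})$; none of these affect the argument.
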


The proof of \Cref{thm:reed-solomon} immediately implies \Cref{thm:reed-solomon-partite}, thus we will omit a detailed proof.
The distinguishing algorithm runs in polynomial time because we can exhaustively search over $(\sqrt{\log n})!^2 = n^{o(1)}$ permutations of the first two modes.
In fact, the distributions in \Cref{thm:reed-solomon} (for $k=3$) can be viewed as taking the distributions in \Cref{thm:reed-solomon-partite} and then padding random bits (along with random permutations) to form $S_n$-symmetric tensors in $(\{0,1\}^n)^{\otimes 3}$.
The algorithm then needs to search over $n^{O(\sqrt{\log n})}$ indices, hence the runtime as stated in \Cref{thm:reed-solomon}.

\subsection{Preliminaries on Reed-Solomon Codes}

The Reed-Solomon code \cite{RS60} is a family of error-correcting codes obtained by evaluating low-degree polynomials over a large field.

\begin{definition}[Reed-Solomon code]
    Let $\F_q$ be a large field with $q \geq n$ a prime power.
    Given $m\in \N$ and distinct $\alpha_1, \dots, \alpha_n \in \F_q$, the Reed-Solomon code is defined as
    \begin{align*}
        \braces*{ (p(\alpha_1), p(\alpha_2),\dots, p(\alpha_n)) \in \F_q^n: p \text{ is a polynomial over $\F_q$ of degree $< m$}} \mper
    \end{align*}
    The natural way to encode a message $x= (x_0, x_1,\dots,x_{m-1}) \in \F_q^m$ is by setting $p_x(\alpha) = \sum_{j=0}^{m-1} x_j \alpha^j$.
\end{definition}

It is well-known that the Reed-Solomon code is $(m-1)$-wise independent:

\begin{fact}[$(m-1$)-wise independence of codeword distribution, see Proposition 4.2 in \cite{HW21}]
\label{fact:indep-wise}
For $q \geq n$ a prime power, let $x_0, \ldots, x_{m-1} \stackrel{i.i.d.}{\sim} \operatorname{Unif}(\mathbb{F}_q)$.
Fix any distinct $\alpha_1, \ldots, \alpha_n \in \mathbb{F}_q$, and define $\beta_1, \ldots, \beta_n \in \mathbb{F}_q$ as $\beta_i = \sum_{j=0}^{m-1} x_j \alpha_i^j$.
Then the marginal distribution over any $m-1$ $\beta_i$s is $\operatorname{Unif}(\mathbb{F}^{m-1})$.
\end{fact}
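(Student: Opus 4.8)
The plan is a direct linear-algebra argument: each coordinate $\beta_i$ is an $\mathbb{F}_q$-linear function of the message $(x_0,\dots,x_{m-1})$, so selecting any $m-1$ of them is the same as applying a single linear map $\mathbb{F}_q^m \to \mathbb{F}_q^{m-1}$, and the whole claim reduces to showing this map is surjective and then invoking the fact that a surjective linear map pushes the uniform distribution forward to the uniform distribution.

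Concretely, fix a subset $S = \{i_1,\dots,i_{m-1}\} \subseteq [n]$ of size $m-1$ and consider the map $T : \mathbb{F}_q^m \to \mathbb{F}_q^{m-1}$ sending $x = (x_0,\dots,x_{m-1})$ to $(\beta_{i_1},\dots,\beta_{i_{m-1}})$. Since $\beta_{i_\ell} = \sum_{j=0}^{m-1} x_j \alpha_{i_\ell}^{\,j}$, the map $T$ is linear with matrix $V \in \mathbb{F}_q^{(m-1)\times m}$ given by $V_{\ell,j} = \alpha_{i_\ell}^{\,j}$, rows indexed by $\ell \in [m-1]$ and columns by $j \in \{0,\dots,m-1\}$. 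First I would show $T$ is surjective, i.e. $V$ has rank $m-1$: the square submatrix of $V$ obtained by keeping columns $j=0,\dots,m-2$ is the $(m-1)\times(m-1)$ Vandermonde matrix $\bigl(\alpha_{i_\ell}^{\,j}\bigr)_{\ell,j}$, whose determinant equals $\prod_{1\le \ell < \ell' \le m-1}(\alpha_{i_{\ell'}} - \alpha_{i_\ell})$. Because the $\alpha_i$ are distinct, this product is nonzero, so the submatrix is invertible and $V$ has full row rank $m-1$.

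The remaining step is the standard observation that a surjective $\mathbb{F}_q$-linear map sends the uniform distribution to the uniform distribution. Since $x$ is uniform on $\mathbb{F}_q^m$, for any $y \in \mathbb{F}_q^{m-1}$ we have $\Pr[T(x) = y] = |T^{-1}(y)|/q^m$; surjectivity makes every fiber $T^{-1}(y)$ a coset of $\ker T$, hence of size $|\ker T| = q^{\,m - \operatorname{rank} T} = q^{\,m-(m-1)} = q$. Therefore $\Pr[(\beta_{i_1},\dots,\beta_{i_{m-1}}) = y] = q/q^m = q^{-(m-1)}$ for every $y \in \mathbb{F}_q^{m-1}$, which is exactly the asserted uniformity.

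There is no genuine obstacle here; the only points requiring care are that the argument implicitly uses $m-1 \le n$ (so that $S$ exists), and that the non-vanishing of the Vandermonde determinant over a field is precisely equivalent to the distinctness hypothesis on the $\alpha_i$. Everything else is routine bookkeeping, so I would present it compactly.
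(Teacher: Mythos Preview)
Your argument is correct; the Vandermonde submatrix has full rank because the $\alpha_i$ are distinct, and the push-forward of the uniform distribution under a surjective linear map is uniform. The paper does not supply its own proof of this fact---it simply cites it from \cite{HW21}---so there is nothing to compare against, but your write-up is exactly the standard justification one would give.
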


The key fact we will need is that Reed-Solomon codes are list-decodable.

\begin{fact}[Guruswami-Sudan list decoding, see Theorem 8 and Theorem 12 in \cite{Guruswami-Sudan}]
\label{fact:GS}
Given $n$ points $\{(x_i, y_i)\}_{i=1}^n$ with $x_i, y_i$ in a field $\mathbb{F}$ of cardinality at most $2^n$, for $t > \sqrt{nm}$ there exists an algorithm that runs in time $O(n^{15})$ and outputs a list of size at most $O(n^{15})$ of all polynomials $p$ of degree at most $m$ such that $y_i = p(x_i)$ for at least $t$ values $i \in [n]$.
\end{fact}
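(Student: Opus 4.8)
This is the Guruswami--Sudan list-decoding guarantee, and the plan is to reproduce their two-phase ``interpolate, then factor'' argument, tracking the integer parameters carefully so that $t > \sqrt{nm}$ comes out as exactly the threshold at which both phases succeed and so that the procedure runs in the claimed time. Throughout, $\F$ is a finite field with $|\F| \le 2^n$, so each field operation costs $\polylog(|\F|) = \poly(n)$; without loss of generality the $x_i$ in the agreement set of any target $p$ are distinct (automatic in the Reed--Solomon setting). Fix a multiplicity parameter $r \in \N$ and a degree bound $D \in \N$, both chosen at the end, and consider $Q(X,Y) = \sum_{a + mb < D} q_{ab}\, X^a Y^b$ with unknown coefficients $q_{ab} \in \F$; the constraint $a + mb < D$ is the $(1,m)$-weighted degree bound, chosen so that $\deg_X\big(X^a p(X)^b\big) \le a + mb < D$ for every $p$ with $\deg p \le m$. \emph{Phase 1 (interpolation):} impose that $Q$ vanishes to order $\ge r$ at each $(x_i,y_i)$, i.e.\ that when $Q$ is rewritten in the shifted variables $(X - x_i,\, Y - y_i)$ all coefficients of total degree $< r$ vanish (characteristic-free formulation of ``all Hasse derivatives of order $< r$ vanish''). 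This is $\binom{r+1}{2}$ homogeneous linear equations in the $q_{ab}$ per point, hence $n\binom{r+1}{2}$ in all, while the number of unknowns is $N(D) := \#\{(a,b) : a, b \ge 0,\ a + mb < D\} \ge D^2/(2m)$. If $N(D) > n\binom{r+1}{2}$, the homogeneous system has a nonzero solution $Q \ne 0$, found by Gaussian elimination in time polynomial in the system size.

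\emph{Phase 2 (root extraction and correctness).} Let $p$ be any polynomial with $\deg p \le m$ agreeing with the data on a set $S$ with $|S| \ge t$, and put $g(X) := Q(X, p(X)) \in \F[X]$, so $\deg g < D$ by the weighted-degree constraint. For $i \in S$ we have $p(x_i) = y_i$, so expanding $Q$ about $(x_i,y_i)$ and substituting $Y = p(X)$, every surviving term is a multiple of $(X - x_i)^c (p(X) - y_i)^d$ with $c + d \ge r$; since $p(X) - y_i = p(X) - p(x_i)$ is divisible by $X - x_i$, that term is divisible by $(X - x_i)^{c+d}$, hence by $(X - x_i)^r$. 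Thus $\prod_{i \in S}(X - x_i)^r$ divides $g$, a polynomial of degree $\ge r|S| \ge rt$, so $rt > D$ forces $g \equiv 0$, i.e.\ $Q(X, p(X)) = 0$; by the factor theorem over $\F(X)$ together with Gauss's lemma (applied to the monic-in-$Y$, primitive $Y - p(X)$) this gives $(Y - p(X)) \mid Q(X,Y)$ in $\F[X,Y]$. Hence every target is a $Y$-root of $Q$, and we recover the complete list by factoring the bivariate $Q$ over $\F$ (polynomial time by standard bivariate factorization over finite fields, or by a $Y$-root-finding routine such as Roth--Ruckenstein / Hensel lifting), keeping the factors $Y - p(X)$ with $\deg p \le m$. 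Distinct such factors are pairwise coprime, so their number is at most $\deg_Y Q < D/m$, a polynomial list size.

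\emph{Choosing the parameters, and the main obstacle.} We need $r, D$ with $N(D) > n\binom{r+1}{2}$ and $rt > D$ at once; the feasible window is roughly $r\sqrt{nm} \le D < rt$, which is exactly where $t > \sqrt{nm}$ enters, and the integrality slack is the delicate point. Since $t \in \N$ and $t > \sqrt{nm}$ force $t^2 \ge nm + 1$, write $t^2 = nm(1+\gamma)$ with $\gamma \ge 1/(nm)$. Take $D$ the largest integer with $D < rt$, so $rt > D$ for free and $D \ge rt - 1$; then $N(D) \ge (rt-1)^2/(2m)$, and a short computation shows $(rt-1)^2 > nm\,r(r+1)$ whenever $r$ exceeds an absolute constant times $\tfrac{1}{\gamma} + \tfrac{t}{nm\gamma}$, which by $\gamma \ge 1/(nm)$ and $t < n$ holds for some $r = O(nm)$. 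With this choice $D < rt = O(n^2 m)$, so the interpolation system has $N(D) = O(n^4 m) = O(n^5)$ unknowns and $n\binom{r+1}{2} = O(n^3 m^2) = O(n^5)$ equations (using $m < n$, forced by $t \le n$); Gaussian elimination then costs $O(n^{15})$, factoring $Q$ (degree $O(n^3)$) costs a smaller power of $n$, and the list has size $O(n^2)$, so after loosening constants everything fits inside the stated $O(n^{15})$. I expect the main obstacle to be precisely this accounting in the near-threshold regime — when $t$ is only just above $\sqrt{nm}$ one is pushed to $r = \Theta(nm)$, and one must verify that the linear-algebra instance is still only polynomially large and that the monomial count beats the constraint count with the exact integer $D = rt - 1$ — rather than the conceptual structure, which is the classical Guruswami--Sudan argument.
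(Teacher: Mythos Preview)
The paper does not prove this statement; it is quoted as a black-box fact from \cite{Guruswami-Sudan} (their Theorems~8 and~12) and used only through its conclusion. Your proposal is a faithful and correct reconstruction of the Guruswami--Sudan interpolate-then-factor argument, with the multiplicity and weighted-degree parameters tracked carefully enough to land on the $t>\sqrt{nm}$ threshold and the $O(n^{15})$ running time; nothing beyond this is required, and the caveat you flag about distinct $x_i$'s is exactly how the paper uses the fact (it first restricts to the set $S'$ of non-repeated evaluation points before invoking it).
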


\subsection{Proof of \texorpdfstring{\Cref{thm:reed-solomon}}{Theorem~\ref{thm:reed-solomon}}}

We first define our distributions for matrices, i.e., the case $k=2$.

\begin{definition}[Null distribution $\cQ_n$] \label{def:null}
    The distribution $\cQ_n$ is the distribution over symmetric matrices $M \in \{0, 1\}^{n \times n}$ with entries $M_{i,j}$ with $i < j$ sampled i.i.d. from $\operatorname{Unif}(\{0, 1\})$.
\end{definition}

For $q \geq 2$ a power of two, we define $\operatorname{binary}_q: \mathbb{F}_q \to \{0, 1\}^{\log_2 q}$ to map the $i$-th element of $\mathbb{F}_q$ to a canonical binary representation of $i$.

\begin{definition}[Planted distribution $\cP_n$] \label{def:planted}
    For some $m \geq 2$ and $2 \leq q \leq 2^{\Omega(n)}$ a power of two, the distribution $\cP_n$ is sampled as follows:
    \begin{enumerate}[(1)]
        \item Sample $x_0, \ldots, x_{m-1} \stackrel{i.i.d}{\sim} \operatorname{Unif}(\mathbb{F}_q)$.
        \item Sample $\alpha_1, \ldots, \alpha_{\floor{n/2}} \stackrel{i.i.d.}{\sim} \operatorname{Unif}(\mathbb{F}_q)$ and define $\beta_1, \ldots, \beta_{\floor{n/2}} \in \mathbb{F}_q$ as $\beta_i = \sum_{j=0}^{m-1} x_j \alpha_i^j$ for all $i=1,\ldots,\floor{n/2}$.
        For any $\alpha_i$ whose value appears more than once among $\alpha_1, \ldots, \alpha_{\floor{n/2}}$, resample $\beta_i \sim \operatorname{Unif}(\mathbb{F}_q)$.
        \item Let the matrix $M_0 \in \{0,1\}^{2 \log_2q \times \floor{n / 2}}$ have:
        \begin{itemize}
            \item $M_0[1 : \log_2 q,\; j] = \operatorname{binary}_q(\alpha_{j})$, for all $j=1, \ldots, \floor{n/2}$,
            \item $M_0[\log_2 q + 1 : 2\log_2 q,\; j] = \operatorname{binary}_q(\beta_{j})$, for all $j=1, \ldots, \floor{n/2}$.
        \end{itemize}
        \item Let the symmetric matrix $M \in \{0, 1\}^{n \times n}$ have $M[1: 2\log_2 q, \; \ceil{n/2} + 1 : n] = M_0$, and let all its other entries $M_{i,j}$ with $i < j$ be sampled i.i.d. from $\operatorname{Unif}(\{0, 1\})$.
        \item Apply a random $S_n$-permutation to the matrix $M$ and return it.
    \end{enumerate}
\end{definition}

First, we prove that in the planted model the marginal distribution on any $m-1$ entries is uniform.
This implies automatically a low-degree lower bound of degree $m-1$.

\begin{lemma}
\label{lemma:rs-indist}
Let $q \geq n$ be a power of two.
For $M \sim \cP_n$, the marginal distribution on any $m-1$ entries in $\{M_{i,j} \mid i < j\}$ is $\operatorname{Unif}(\{0, 1\}^{m-1})$.
\end{lemma}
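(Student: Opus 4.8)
The plan is to remove the random $S_n$-permutation first, and then to show that any $m-1$ entries of the un-permuted matrix are uniform by conditioning on the sampled evaluation points $\alpha_i$.

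First I would dispense with step~(5) of \Cref{def:planted}. Write $M = \sigma\cdot M'$, where $M'$ is the symmetric matrix produced by steps~(1)--(4) and $\sigma\in S_n$ acts by $(\sigma\cdot M')_{i,j}=M'_{\sigma^{-1}(i),\sigma^{-1}(j)}$. Fix any $m-1$ distinct off-diagonal positions, regarded as unordered pairs $\{i_\ell,j_\ell\}$ with $i_\ell\neq j_\ell$. Since $\sigma^{-1}$ is a bijection of $[n]$, the pairs $\{\sigma^{-1}(i_\ell),\sigma^{-1}(j_\ell)\}$ are again $m-1$ distinct off-diagonal pairs, and $M'$ is symmetric; hence, conditioned on $\sigma$, the tuple $(M_{i_1,j_1},\dots,M_{i_{m-1},j_{m-1}})$ is distributed exactly as the values of $M'$ at some $m-1$ distinct off-diagonal positions. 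A mixture of uniform distributions is uniform, so it suffices to prove that for \emph{every} choice of $m-1$ distinct off-diagonal positions, the corresponding entries of $M'$ are $\operatorname{Unif}(\{0,1\}^{m-1})$.

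Next I would split the chosen positions into those lying in the planted block $R=\{1,\dots,2\log_2 q\}\times\{\ceil{n/2}+1,\dots,n\}$ or its transpose, and those lying outside it. By construction the out-of-block entries are i.i.d.\ $\operatorname{Unif}(\{0,1\})$ and independent of everything determined inside $R$ (which is a function only of $x,\alpha,\beta$), so it is enough to show that any $t\le m-1$ distinct positions inside $R\cup R^\top$ yield $\operatorname{Unif}(\{0,1\}^t)$. Using symmetry of $M'$, each such position is either a fixed bit of some $\alpha_c$ (an ``$\alpha$-bit'') or a fixed bit of the corresponding $\beta_c$ (a ``$\beta$-bit''), where $c\in\{1,\dots,\floor{n/2}\}$ indexes a column of $M_0$; let $C_\beta$ be the set of columns $c$ appearing among the $\beta$-bits, so $|C_\beta|\le t\le m-1$. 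Now condition on the whole tuple $(\alpha_1,\dots,\alpha_{\floor{n/2}})$, which fixes all $\alpha$-bits. The crux is the claim that, given this conditioning, $\{\beta_c:c\in C_\beta\}$ is jointly uniform on $\F_q^{|C_\beta|}$: split $C_\beta$ according to whether $\alpha_c$ is unique among $\alpha_1,\dots,\alpha_{\floor{n/2}}$ or not. For the non-unique columns, step~(2) resamples $\beta_c$ as a fresh uniform element of $\F_q$, independently across columns and of $x$. For the unique columns, the $\alpha_c$ are pairwise distinct and number at most $m-1$, so \Cref{fact:indep-wise} gives that the corresponding $\beta_c=p_x(\alpha_c)$ are jointly uniform on $\F_q$ to the appropriate power; and after conditioning on the $\alpha$'s these depend only on $x$, hence are independent of the resampling randomness. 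Since $q$ is a power of two, $\operatorname{binary}_q$ is a bijection onto $\{0,1\}^{\log_2 q}$, so a uniform element of $\F_q$ has i.i.d.\ uniform bits; thus the $\beta$-bits are uniform conditionally on the $\alpha$'s, and together with the $\alpha$-bits (which are functions of the conditioning) the $t$ in-block bits form $\operatorname{Unif}(\{0,1\}^t)$. Unconditioning and recombining with the independent uniform out-of-block entries finishes the proof.

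The only genuinely delicate point is this last step: one must invoke the resampling rule of step~(2) precisely so that two columns sharing an $\alpha$-value do not force their $\beta$-values, and hence their bits, to collide, and one must check that \Cref{fact:indep-wise} is being applied to a set of at most $m-1$ \emph{distinct} evaluation points. Everything else --- removing the random permutation and separating in-block from out-of-block coordinates --- is routine bookkeeping.
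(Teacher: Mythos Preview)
Your proposal is correct and follows essentially the same approach as the paper's proof: condition on the permutation, separate the planted-block entries from the freshly random ones, then condition on the $\alpha$'s and split the relevant $\beta$'s according to whether their evaluation point is unique (handled by \Cref{fact:indep-wise}) or repeated (handled by the resampling rule). Your write-up is in fact slightly more explicit than the paper's in noting that $q$ being a power of two is what makes $\operatorname{binary}_q$ a bijection and hence uniform field elements have i.i.d.\ uniform bits.
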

\begin{proof}
Let us condition on the $S_n$-permutation that is applied in the last step in \cref{def:planted}.
Then, for a fixed $S_n$-permutation, any entry that does not correspond to $\alpha$s and $\beta$s is sampled independently from $\operatorname{Unif}(\{0, 1\})$, so it suffices to only prove the result for entries corresponding to $\alpha$s and $\beta$s.
Let $a_1, \ldots, a_{m-1}$ be $m-1$ entries corresponding to $\alpha$s and $b_1, \ldots, b_{m-1}$ be $m-1$ entries corresponding to $\beta$s.
Then 
\[\Pr_{M \sim \cP_n}(a_1, \ldots, a_{m-1}, b_1, \ldots, b_{m-1}) = \Pr_{M \sim \cP_n}(b_1, \ldots, b_{m-1} \mid a_1, \ldots, a_{m-1}) \Pr_{M \sim \cP_n}(a_1, \ldots, a_{m-1})\,,\]
where $\Pr_{M \sim \cP_n}(a_1, \ldots, a_{m-1})$ is uniform by definition.

For the first term on the right-hand side, we prove the stronger fact that $\Pr_{M \sim \cP_n}(b_1, \ldots, b_{m-1} \mid \alpha_1, \ldots, \alpha_{n})$ is uniform for any $\alpha_1, \ldots, \alpha_n$.
By \cref{fact:indep-wise}, for distinct $\alpha$s, it follows that the distribution over any $m-1$ $\beta$s is uniform.
On the other hand, recall that if some some $\alpha$s happen to be identical, the corresponding $\beta$s are resampled uniformly.
Let $S \subseteq \{b_1, \ldots, b_{m-1}\}$ be the subset of $b_1, \ldots, b_{m-1}$ for which the corresponding $\beta$s are resampled uniformly, and let $S' = \{b_1, \ldots, b_{m-1}\} \setminus S_b$.
Then
\begin{align*}
\Pr_{M \sim \cP_n}(b_1, \ldots, b_{m-1} \mid \alpha_1, \ldots, \alpha_{n})
&= \Pr_{M \sim \cP_n}(S \mid \alpha_1, \ldots, \alpha_{n}, S') \Pr_{M \sim \cP_n}(S' \mid \alpha_1, \ldots, \alpha_{n})\,,
\end{align*}
where both terms are uniform based on the discussion above.
This concludes the proof.
\end{proof}

Second, we prove that there is quasi-polynomial time algorithm that distinguishes the null and planted distributions with high probability, even when noise is applied to the planted distribution.

\begin{lemma}
\label{lemma:rs-dist}
For $q = \Theta(n)$ with $q \geq n$ a power of two, $\epsilon > 0$ a small enough constant, and $m \leq n^{1-6\epsilon}$, there exists an algorithm $\mathcal{A}$ that, given as input a symmetric matrix $M \in \{0, 1\}^{n \times n}$, runs in time $n^{O(\log_2 q)}$ and satisfies
\[\Pr_{M \sim \cQ_n}(\mathcal{A}(M) = 0) \geq 1-\exp(-n^{1-O(\epsilon)})\,,\,\quad \Pr_{M \sim T_\epsilon \cP_n}(\mathcal{A}(M) = 1) \geq 1-\exp(-n^{1-O(\epsilon)})\,.\]
\end{lemma}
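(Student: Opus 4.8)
The plan is to build the distinguisher $\mathcal{A}$ around exhaustive search over the $2\log_2 q$ special rows together with the list-decoding guarantee of Fact~\ref{fact:GS}. Concretely, on input $M$, the algorithm tries all $\binom{n}{2\log_2 q}(2\log_2 q)!= n^{O(\log_2 q)}$ ordered choices of $2\log_2 q$ rows; for each such choice it reads off, in each of the remaining columns $j$, a candidate pair $(\wt\alpha_j,\wt\beta_j)\in\F_q^2$ from the corresponding $2\log_2 q$ bits, runs Guruswami--Sudan list decoding on the $\approx n/2$ pairs with the threshold $t = \Theta(n^{1-2\eps})$ (any value comfortably above $\sqrt{nm}$ given $m\le n^{1-6\eps}$), and checks whether any returned polynomial $p$ of degree $<m$ agrees with at least $t$ of the pairs. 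If for some choice of rows the list decoder returns a nonempty list passing this check, output $1$; otherwise output $0$. The running time is $n^{O(\log_2 q)}$ as claimed, since the outer search dominates and each inner call costs $\poly(n)$ by Fact~\ref{fact:GS}.

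For completeness on $T_\eps\cP_n$: after undoing the (unknown) $S_n$-permutation by guessing, in the correct guess the $2\log_2 q \times \lfloor n/2\rfloor$ block encodes $(\alpha_j,\beta_j)$ subjected to independent bit flips at rate $\eps$. A column $j$ is "clean" iff none of its $2\log_2 q$ bits were flipped and $\alpha_j$ is distinct among the $\alpha_i$'s (so that $\beta_j = p_x(\alpha_j)$ genuinely); each bit survives with probability $1-\eps/2$ and collisions among $\lfloor n/2\rfloor$ uniform elements of $\F_q$ with $q=\Theta(n)$ affect only an $O(1/n)$ fraction in expectation, so a column is clean with probability at least $(1-\eps/2)^{2\log_2 q}\cdot(1-O(1/n)) = n^{-O(\eps)}$. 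Hence the expected number of clean columns is $\ge \tfrac12 n^{1-O(\eps)}$, and since cleanness of distinct columns is independent, a Chernoff bound gives at least $n^{1-O(\eps)}$ clean columns except with probability $\exp(-n^{1-O(\eps)})$. Choosing $\eps$ small enough that $n^{1-O(\eps)} \gg \sqrt{nm} \ge \sqrt{n\cdot n^{1-6\eps}} = n^{1-3\eps}$, the threshold $t$ can be set below the clean-column count but above $\sqrt{nm}$, so the list decoder is guaranteed to output the true $p_x$; thus $\mathcal{A}$ outputs $1$ on this guess, and therefore outputs $1$.

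For soundness on $\cQ_n$: fix any of the $n^{O(\log_2 q)}$ guesses of rows. Under $\cQ_n$ all entries above the diagonal are i.i.d.\ uniform, so the induced pairs $(\wt\alpha_j,\wt\beta_j)$ are i.i.d.\ uniform in $\F_q^2$ across the $\approx n/2$ columns (modulo the mild dependence coming from symmetry, which only involves $O(\log^2 n)$ entries and is harmless). For a fixed degree-$<m$ polynomial $p$, the probability that $\wt\beta_j = p(\wt\alpha_j)$ is exactly $1/q = \Theta(1/n)$, so the number of agreements is stochastically dominated by $\mathrm{Bin}(n/2, O(1/n))$, which is $O(1)$ in expectation; by a Chernoff bound it exceeds $t = \Theta(n^{1-2\eps})$ with probability $\exp(-\Omega(t\log t))=\exp(-n^{1-O(\eps)})$. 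There are at most $q^m \le 2^{m\log_2 q} = \exp(\wt O(n^{1-6\eps}))$ polynomials and $n^{O(\log_2 q)} = \exp(\wt O(1))\cdot$--wait, $n^{O(\log n)} = \exp(O(\log^2 n))$-- row-guesses, and the per-pair the list decoder only ever certifies a polynomial if it passes the $t$-agreement check; union-bounding the failure probability over all (row-guess, polynomial) pairs still leaves $\exp(-n^{1-O(\eps)})$ provided $\eps$ is small enough that $n^{1-O(\eps)}$ dominates $m\log_2 q = \wt O(n^{1-6\eps})$. Hence with probability $1-\exp(-n^{1-O(\eps)})$ no guess certifies any polynomial and $\mathcal{A}$ outputs $0$.

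The main obstacle is the soundness union bound: one must verify that the number of false certificates the list decoder could possibly emit, summed over all exponentially-many row-guesses and all $q^m$ candidate polynomials, is killed by the $\exp(-n^{1-O(\eps)})$ tail on the agreement count — this is exactly where the slack $m\le n^{1-6\eps}$ versus $t=\Theta(n^{1-2\eps})$ is used, and it forces the particular polynomial relationship between the degree $D$ in the first bullet and the noise exponent. A secondary technical point is handling the $O(\log^2 n)$-entry dependence introduced by the final $S_n$-permutation in both $\cP_n$ and $\cQ_n$; this is minor since it perturbs only a vanishing fraction of columns, but it must be tracked so that "i.i.d.\ uniform pairs" and "independent cleanness" are invoked legitimately.
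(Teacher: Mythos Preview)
Your approach is essentially the paper's: enumerate the $2\log_2 q$ special rows, read off $(\wt\alpha_j,\wt\beta_j)$ from each remaining column, run Guruswami--Sudan, and threshold on the agreement count. The soundness union bound (tail on $\mathrm{Bin}(n,1/q)$ beating $q^m$ polynomials times $n^{O(\log q)}$ row-guesses) matches the paper's argument, just phrased column-by-column rather than submatrix-by-submatrix.

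There is one genuine technical gap in your completeness argument. You assert that ``cleanness of distinct columns is independent'' and invoke Chernoff, but cleanness bundles in the event that $\alpha_j$ is distinct among \emph{all} the $\alpha_i$'s, and these distinctness events are manifestly dependent across $j$. (Relatedly, the collision probability for a single $\alpha_j$ with $q=\Theta(n)$ and $\Theta(n)$ samples is $\Theta(1)$, not $O(1/n)$ as you wrote.) The paper handles this by decoupling: it first bounds the number of planted $\alpha_j$'s that are non-repeated in the full column set \emph{and} have their $\log_2 q$ bits uncorrupted, using McDiarmid's inequality (changing any single $\alpha_j$ can perturb the count by at most a constant), and only \emph{then} applies an independent-coin Chernoff bound to the $\beta_j$ corruptions conditioned on that set. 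You should replace your direct Chernoff step with this two-stage argument. Also, the algorithm reads all $n-2\log_2 q$ columns, not just the $\lfloor n/2\rfloor$ planted ones; the extra columns only add random pairs and are harmless for list decoding, but the distinctness event in the paper is defined relative to this full set, which is what makes McDiarmid necessary. The ``$O(\log^2 n)$-entry dependence from the $S_n$-permutation'' you flag is not actually an issue here: once the correct row guess is fixed, the relevant entries are genuinely independent in both models.
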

\begin{proof}
The algorithm is:
\begin{enumerate}
    \item Guess $2 \log_2 q$ distinct ordered indices $i_1, \ldots, i_{2 \log_2 q} \in [n]$.
    \item For each $j \in S = \{1, \ldots, n\} \setminus \{i_1, \ldots, i_{2 \log_2 q}\}$, let
    \[\alpha_j = \operatorname{binary}_q^{-1}(M[(i_1, \ldots, i_{\log_2 q}), j])\,,\]
    \[\beta_j = \operatorname{binary}_q^{-1}(M[(i_{\log_2 q + 1}, \ldots, i_{2 \log_2 q}), j])\,.\]
    \item Let $S' = \{j \in S \mid \alpha_j \text{ appears only once among } \{\alpha_j\}_{j \in S}\}$.
    Run the list-decoding algorithm from \cref{fact:GS} on $\{(\alpha_j,\beta_j)\}_{j \in S'}$.
    For each degree-$(m-1)$ polynomial in the output list, check whether at least $n' = O(n^{1-6\epsilon})$ pairs $\{(\alpha_j, \beta_j)\}_{j \in S'}$ satisfy $\beta_j = p(\alpha_j)$.
    If yes, return $1$.
    \item If the algorithm did not return $1$ on any guess, return $0$.
\end{enumerate}

The time complexity is dominated by the time to guess the indices.

\paragraph{Null case}
We first prove that, if $M \sim \cQ_n$, then the algorithm outputs $0$ with high probability.
The algorithm outputs $1$ only if there exists a $2 \log_2 q \times n'$ submatrix of $M$ (with distinct row and column indices) and a degree-$(m-1)$ polynomial such that $\log_2 q \cdot n'$ entries of the submatrix are a deterministic function (depending on the degree-$(m-1)$ polynomial) of the other $\log_2 q \cdot n'$ entries.
For a fixed submatrix and a fixed degree-$(m-1)$ polynomial, this event has probability $2^{-n' \log_2 q}$.
Then, for a fixed submatrix, by union bounding over all $q^m$ degree-$(m-1)$ polynomials, we get that the algorithm outputs $1$ with probability at most $2^{-(n'-m) \log_2 q}$.
Finally, we need to union bound over all submatrices of size $2 \log_2 q \times n'$. 
We note that the algorithm is invariant to the order of the columns in the submatrix, so it suffices to consider submatrices with ordered rows but unordered columns, of which there are at most $n^{2\log_2 q} \binom{n}{n'} \leq n^{2 \log_2 q} (en/n')^{n'} \leq n^{2 \log_2 q} \cdot O(n^{6\epsilon})^{n'}$.
Then we get overall that the algorithm outputs $1$ with probability at most $2^{-(n'-m) \log_2 q + 2 \log_2 n \log_2 q + O(\epsilon n') \log_2 n} \leq q^{-\Omega(n')}$.

\paragraph{Planted case}
We prove now that, if $M \sim \cP_n$, then the algorithm outputs $1$ with high probability.
Consider the guess in which $i_1, \ldots, i_{2 \log_2 q}$ correspond to the rows of the planted matrix $M_0$ from \cref{def:planted}.

We start by lower bounding the number of samples $\alpha_j$ from $M_0$ that appear only once among $\{\alpha_j\}_{j \in S}$ and whose bits are uncorrupted by the noise operator.
For some fixed $\alpha_j$ from $M_0$, the probability that no other $\{\alpha_j\}_{j \in S}$ is equal to it is at least $\Paren{1-1/q}^n \geq \Omega(1)$, and the probability that it is uncorrupted by the noise operator is at least $(1-\epsilon)^{\log_2 q}$.
These two events are independent, so the probability of both happening is at least 
$\Omega((1-\epsilon)^{\log_2 q})$.
There are $\floor{n/2}$ samples $\alpha_j$ in $M_0$, so from the above we get that the expected number of non-repeated and uncorrupted samples $\alpha_j$ from $M_0$ is at least $\Omega(n(1-\epsilon)^{\log_2 q})$.
Furthermore, if one of $\{\alpha_j\}_{j \in S}$ changes arbitrarily, the number of such non-repeated and uncorrupted samples can only increase or decrease by at most $3$.
Then, by McDiarmid's inequality, the probability that the number of non-repeated and uncorrupted samples $\alpha_j$ from $M_0$ is at least $\Omega(n(1-\epsilon)^{\log_2 q})$ is lower bounded by $1 - \exp\Paren{-\Omega\Paren{n(1-\epsilon)^{2\log_2 q}}} = 1-\exp\Paren{-n^{1-O(\epsilon)}}$.

Let us condition on the number of non-repeated and uncorrupted samples $\alpha_j$ from $M_0$ being at least $\Omega(n(1-\epsilon)^{\log_2 q})$.
The noise operator acts independently on the samples $\beta_j$ from $M_0$ corresponding to these $\alpha_j$.
For some fixed $\beta_j$ from $M_0$, the probability that its bits are uncorrupted by the noise is at least $(1-\epsilon)^{\log_2 q}$.
Then, out of at least $\Omega(n(1-\epsilon)^{\log_2 q})$ samples $\beta_j$ from $M_0$ corresponding to non-repeated and uncorrupted $\alpha_j$, by Binomial tail bounds we get that with probability $1 - \exp\Paren{-\Omega\Paren{n(1-\epsilon)^{3\log_2 q}}} = 1-\exp\Paren{-n^{1-O(\epsilon)}}$ at least $\Omega\Paren{n(1-\epsilon)^{2\log_2 q}}$ of them are uncorrupted by the noise.

Then, by the guarantees in \cref{fact:GS}, as long as $\Omega\Paren{n(1-\epsilon)^{2\log_2 q}} > \sqrt{nm}$, there are sufficiently many non-repeated and uncorrupted samples $(\alpha_j, \beta_j)$ from $M_0$ such that the list-decoding algorithm returns with high probability a list that includes the true polynomial relating these pairs, and the algorithm returns $1$.
The condition is satisfied for $m \leq n^{1-6\epsilon}$.
\end{proof}

\subsection{Extension to \texorpdfstring{$k > 2$}{k > 2}}
We now generalize our results to $k$-tensors with $k > 2$, for which we give a distinguisher with runtime $n^{O(k \log_2^{1/(k-1)} q)}$.

\begin{definition}[Null distribution $\cQ_n^{(k)}$] \label{def:nullk}
    The distribution $\cQ_n^{(k)}$ is the distribution over symmetric $k$-tensors $M \in \{0, 1\}^{n^{\otimes k}}$ with entries $M_{i_1,\ldots, i_k}$ with $i_1 < \ldots < i_k$ sampled i.i.d. from $\operatorname{Unif}(\{0, 1\})$.
\end{definition}

For $k \geq 2$ and $q \geq 2$ a power of two, we define the randomized function $\operatorname{binary}_{q}^{(k)} : \mathbb{F}_q \to \{0, 1\}^{\ceil{(\log_2 q)^{1/(k-1)}}^{\otimes (k-1)}}$ to map the $i$-th element of $\mathbb{F}_q$ to a canonical binary representation of $i$ as $\log_2 q$ binary entries in a $(k-1)$-tensor of dimension $\ceil{(\log_2 q)^{1/(k-1)}}^{\otimes (k-1)}$.
If $(\log_2 q)^{1/(k-1)}$ is not an integer and as a consequence the number of entries of the tensor is larger than $\log_2 q$, then the remaining binary entries are sampled independently from $\operatorname{Unif}(\{0, 1\})$.

\begin{definition}[Planted distribution $\cP_n^{(k)}$] \label{def:plantedk}
    For some $m \geq 2$ and $2 \leq q \leq 2^{\Omega(n)}$ a power of two, the distribution $\cP_n^{(k)}$ is sampled as follows:
    \begin{enumerate}[(1)]
        \item Sample $x_0, \ldots, x_{m-1} \stackrel{i.i.d}{\sim} \operatorname{Unif}(\mathbb{F}_q)$.
        \item Sample $\alpha_1, \ldots, \alpha_{\floor{n/2}} \stackrel{i.i.d.}{\sim} \operatorname{Unif}(\mathbb{F}_q)$ and define $\beta_1, \ldots, \beta_{\floor{n/2}} \in \mathbb{F}_q$ as $\beta_i = \sum_{j=0}^{m-1} x_j \alpha_i^j$ for all $i=1,\ldots,\floor{n/2}$.
        For any $\alpha_i$ whose value appears more than once among $\alpha_1, \ldots, \alpha_{\floor{n/2}}$, resample $\beta_i \sim \operatorname{Unif}(\mathbb{F}_q)$.
        \item Define $\ell = \ceil{(\log_2 q)^{1/(k-1)}}$, and let the $k$-tensor $M_0 \in \{0,1\}^{(2\ell)^{\otimes (k-1)} \times \floor{n / 2}}$ have:
        \begin{itemize}
            \item $T_0[1 : \ell,\; \ldots, 1: \ell, j] = \operatorname{binary}_q^{(k)}(\alpha_{j})$, for all $j=1,\ldots,\floor{n/2}$,

            \item $T_0[\ell+1 : 2\ell,\; \ldots, \ell+1: 2\ell, j] = \operatorname{binary}_q^{(k)}(\beta_{j})$, for all $j=1,\ldots,\floor{n/2}$.
        \end{itemize}
        \item Define $R_1 = [1 : 2\ell]$, $R_2 = [2\ell + 1 : 4\ell]$, ..., $R_{k-1} = [(2k-4)\ell + 1: (2k-2)\ell]$.
        Then let the symmetric $k$-tensor $M \in \{0, 1\}^{n^{\otimes k}}$ have $T[R_1, \ldots, R_{k-1}, \ceil{n/2}+1 : n] = M_0$, and let all its other entries $T_{i_1, \ldots, i_k}$ with $i_1 < \ldots < i_k$ be sampled i.i.d. from $\operatorname{Unif}(\{0, 1\})$.
        
        \item Apply a random $S_n$-permutation to the tensor $M$ and return it.
    \end{enumerate}
\end{definition}

The low-degree hardness for $\cQ_n^{(k)}$ and $\cP_n^{(k)}$ follows from the same argument as in \cref{lemma:rs-indist}.
It remains to prove that there exists an algorithm with runtime $n^{O(k \log_2^{1/(k-1)} q)}$ that distinguishes between the two distributions with high probability, even when noise is applied to the planted distribution.

\begin{lemma}
\label{lemma:rs-distk}
For constant $k \geq 2$, $q = \Theta(n)$ with $q \geq n$ a power of two, $\epsilon > 0$ a small enough constant, and $m \leq n^{1-6\epsilon}$, there exists an algorithm $\mathcal{A}$ that, given as input a symmetric tensor $M \in \{0, 1\}^{n^{\otimes k}}$, runs in time $n^{O(k \log_2^{1/(k-1)} q)}$ and satisfies
\[\Pr_{M \sim \cQ_n^{(k)}}(\mathcal{A}(M) = 0) \geq 1-\exp(-n^{1-O(\epsilon)})\,,\,\quad \Pr_{M \sim T_\epsilon\cP_n^{(k)}}(\mathcal{A}(M) = 1) \geq \exp(-n^{1-O(\epsilon)})\,.\]
\end{lemma}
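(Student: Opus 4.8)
The plan is to generalize the algorithm and analysis from Lemma~\ref{lemma:rs-dist} in a direct way, replacing the $2\log_2 q \times n$ planted submatrix with a planted sub-tensor whose first $k-1$ modes have size $2\ell = 2\ceil{(\log_2 q)^{1/(k-1)}}$ and whose last mode has size $\floor{n/2}$. Concretely, the algorithm first guesses, for each of the first $k-1$ modes, a set of $2\ell$ distinct ordered indices in $[n]$; there are at most $n^{2\ell}$ choices per mode and $k$ is constant, so the total number of guesses is $n^{O(k\ell)} = n^{O(k\log_2^{1/(k-1)} q)}$, which dominates the runtime. Having fixed these indices, the algorithm reads off, for each remaining index $j$ in the last mode, the candidate pair $(\alpha_j,\beta_j)$ by inverting $\operatorname{binary}_q^{(k)}$ on the two designated sub-blocks (the $[1:\ell]^{\otimes(k-1)}$ block and the $[\ell+1:2\ell]^{\otimes(k-1)}$ block of the guessed coordinates); note that the extra padding entries, if $(\log_2 q)^{1/(k-1)}\notin\N$, are simply ignored by $\operatorname{binary}_q^{(k),-1}$. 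It then discards repeated $\alpha_j$'s, runs the Guruswami--Sudan list-decoder (Fact~\ref{fact:GS}) on the surviving pairs, and outputs $1$ iff some degree-$(m-1)$ polynomial in the output list agrees with at least $n' = O(n^{1-6\eps})$ of the pairs.

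For the null case, the argument is identical in spirit to Lemma~\ref{lemma:rs-dist}: for a fixed guess of the $k-1$ index sets and a fixed degree-$(m-1)$ polynomial, the event that the $\beta$-block is the prescribed deterministic function of the $\alpha$-block on $n'$ columns has probability $2^{-n'\log_2 q}$ (since each column contributes $\log_2 q$ independent uniform bits that must match), where I use that the $n'$ columns are distinct and the $\alpha$- and $\beta$-blocks are disjoint sets of entries of the symmetric tensor. Union bounding over the $q^m$ polynomials and then over the index sets — there are at most $n^{2\ell}$ ordered choices per mode, $k-1$ modes, times $\binom{n}{n'}\le (en/n')^{n'} = O(n^{6\eps})^{n'}$ unordered column choices — gives failure probability at most $2^{-(n'-m)\log_2 q + O(k\ell\log n) + O(\eps n'\log n)} \le q^{-\Omega(n')} = \exp(-n^{1-O(\eps)})$, exactly as before, since $k\ell\log n = n^{o(1)}$ is absorbed.

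For the planted case, condition on the correct guess (the one matching the rows $R_1,\dots,R_{k-1}$ and columns $\ceil{n/2}+1:n$ of the planted block). The only genuinely new point is that the $\operatorname{binary}_q^{(k)}$ encoding writes each of $\alpha_j,\beta_j$ across $\log_2 q$ entries (spread over a $(k-1)$-dimensional sub-block, but still exactly $\log_2 q$ bits), so the probability a given $\alpha_j$ survives the noise operator uncorrupted is still $(1-\eps)^{\log_2 q}$, and likewise for $\beta_j$; the combinatorics of "appears only once among the $\alpha_j$'s" is unchanged since it depends only on the values in $\F_q$, not on how they are encoded. Thus the McDiarmid / Binomial-tail computations of Lemma~\ref{lemma:rs-dist} go through verbatim: with probability $1-\exp(-n^{1-O(\eps)})$ there are $\Omega(n(1-\eps)^{2\log_2 q}) = n^{1-O(\eps)}$ non-repeated pairs $(\alpha_j,\beta_j)$ that are uncorrupted in both coordinates, and since $m\le n^{1-6\eps}$ this exceeds $\sqrt{nm}$, so Fact~\ref{fact:GS} recovers a list containing $p_x$ and the algorithm outputs $1$. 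The main (and essentially only) obstacle is bookkeeping: checking that the padding bits introduced when $(\log_2 q)^{1/(k-1)}$ is non-integral do not interfere with either the uniformity-based null bound (they are independent uniform bits, hence harmless) or the decoding (they are ignored), and that the index-guessing overhead $n^{O(k\ell)}$ is correctly $n^{O(k\log_2^{1/(k-1)} q)}$; none of this requires new ideas beyond Lemma~\ref{lemma:rs-dist}.
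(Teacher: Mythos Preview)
Your proposal is correct and takes essentially the same approach as the paper: the paper's algorithm guesses $2k-2$ ordered $\ell$-tuples $R_1,\dots,R_{2k-2}$ (rather than your $k-1$ ordered $(2\ell)$-tuples split into halves, a trivial reparameterization with the same $n^{O(k\ell)}$ search space), reads off $(\alpha_j,\beta_j)$ via $(\operatorname{binary}_q^{(k)})^{-1}$, and runs Guruswami--Sudan. The paper's proof then literally says ``the rest of the analysis is analogous to that of \cref{lemma:rs-dist},'' which is exactly the null/planted analysis you have spelled out in detail.
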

\begin{proof}
The algorithm is:
\begin{enumerate}
    \item Define $\ell = \ceil{(\log_2 q)^{1/(k-1)}}$, and guess a list of size $2k-2$ whose elements are $\ell$-tuples of distinct ordered indices in $[n]$, and call these $\ell$-tuples $R_1, \ldots, R_{2k-2}$. Let the set of all indices guessed be $I$.
    \item For each $j \in \{1, \ldots, n\} \setminus I$, let
    \[\alpha_j = \Paren{\operatorname{binary}_q^{(k)}}^{-1}(M[R_1, \ldots, R_{k-1}, j])\,,\]
    \[\beta_j = \Paren{\operatorname{binary}_q^{(k)}}^{-1}(M[R_{k}, \ldots, R_{2k-2}, j])\,,\]
    where $\Paren{\operatorname{binary}_q^{(k)}}^{-1}$ is understood to ignore the redundant entries in its argument in the case that $(\log_2 q)^{1/(k-1)}$ is not an integer (see the definition of $\operatorname{binary}_q^{(k)}$ above).
    \item Run the list-decoding algorithm from \cref{fact:GS} on $\{(\alpha_j,\beta_j)\}_{j}$.
    For each degree-$(m-1)$ polynomial in the output list, check whether at least $n' = O(n^{1-6\epsilon})$ pairs $(\alpha_j, \beta_j)$ satisfy $\beta_j = p(\alpha_j)$.
    If yes, return $1$.
    \item If the algorithm did not return $1$ on any guess, return $0$.
\end{enumerate}

The time complexity is dominated by the time to guess the indices.
The rest of the analysis is analogous to that of \cref{lemma:rs-dist}.
\end{proof}

\section{Low-degree Conjecture vs the Top Eigenvalue}
\label{sec:top-eigenvalue}

In this section, we prove \Cref{thm:main-eigenvalue}.
\begin{theorem}
\label{thm:eigenvalue}
    There exist rotational invariant distributions $\cQ_n$ and $\cP_n$ over symmetric matrices in $\R^{n\times n}$ such that
    \begin{itemize}
        \item \textbf{Degree-$\poly(n)$ indistinguishable:} 
        $\LDLR{D}{\cP_n}{\cQ_n} \leq \frac{1}{\polylog(n)}$ for $D = n^{1/3}/\polylog(n)$.
        
        \item \textbf{Distinguishing algorithm after noise:} fix any $\eps \in [0,1)$.
        Let $M' = (1-\eps)M_1 + \eps M_0$ where $M_1 \sim \cP_n$ and $M_0 \sim \cQ_n$.
        Then, there is an algorithm $\calA$ that runs in polynomial time such that
        $\Pr_{M \sim \cQ_n}[\calA(M)=0] = 1$ and $\Pr_{M'}[\calA(M') = 1] \geq 1- \frac{1}{\poly(n)}$.
    \end{itemize}
\end{theorem}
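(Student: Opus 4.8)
The plan is to take $\cQ_n$ to be the law of $U\diag(\lambda_1,\dots,\lambda_n)U^\top$, where $U$ is Haar on $O(n)$ (a Gaussian $U$ works equally well up to negligible changes, as the paper notes) and $\lambda_1,\dots,\lambda_n$ are i.i.d.\ from a carefully chosen measure $\mu$ on $[-1,0]$ of the form $\mu=(1-p)\delta_0+p\,\nu$, with $\nu$ an absolutely continuous measure on $[-1,0]$ whose orthonormal polynomials are well controlled on $[-1,0]$ (e.g.\ an arcsine/Legendre-type law; the atom at $0$ is handled via a standard mass-point perturbation formula), $p=c(\eps)\lambda^*$, and $\lambda^*=\wt{\Theta}(n^{-2/3})$. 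The planted law $\cP_n$ is the same except $\lambda_n$ is set to the fixed value $\lambda^*>0$. Both laws are rotation invariant, $\cQ_n$ is supported on negative semidefinite matrices, and matrices in either support have rank $\approx pn=\wt{O}(\lambda^* n)$ --- a fact we will need for the noise analysis.

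For the low-degree bound I would first symmetrize over $O(n)$: averaging a degree-$\le D$ polynomial $f$ over conjugations $M\mapsto OMO^\top$ preserves $\E_{\cP_n}f-\E_{\cQ_n}f$, does not increase $\Var_{\cQ_n}f$, and keeps the degree, so $f$ may be taken rotation invariant, hence a polynomial in $\tr M,\dots,\tr M^D$, i.e.\ a symmetric polynomial of degree $\le D$ in the eigenvalues. This reduces $\LDLR{D}{\cP_n}{\cQ_n}$ to the advantage of the ``plant one coordinate'' problem on $\R^n$ ($\mu^{\otimes n}$ versus the law that sets a random coordinate to $\lambda^*$ and draws the rest from $\mu$), over symmetric degree-$\le D$ polynomials, which by symmetrization is the full degree-$\le D$ advantage there. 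Expanding $f$ in the product orthonormal basis $\{\prod_i h_{\alpha_i}(\lambda_i)\}$ of $L^2(\mu^{\otimes n})$ --- here $\{h_k\}$ are the orthonormal polynomials of $\mu$, and this step is exactly why we insist that $\mu$ have more than $D$ support points, so take $\nu$ continuous --- only the single-coordinate terms survive in the planted expectation, and Cauchy--Schwarz against $\Var_{\mu^{\otimes n}}f\ge\sum_{j,k}\widehat f(\{j\},k)^2$ yields $\LDLR{D}{\cP_n}{\cQ_n}\le\sqrt{\tfrac1n\sum_{k=1}^{D}h_k(\lambda^*)^2}$. Since $\lambda^*$ lies at distance $\lambda^*$ outside $\mathrm{supp}(\mu)=[-1,0]$, the Bernstein--Walsh inequality together with the control of $h_k$ on $[-1,0]$ give $|h_k(\lambda^*)|\le\poly(k)\cdot(1+\Theta(\sqrt{\lambda^*}))^{k}=\poly(k)\,e^{O(k\sqrt{\lambda^*})}$, so the sum is at most $\poly(D)\,e^{O(D\sqrt{\lambda^*})}$; plugging $D=n^{1/3}/\polylog(n)$ and $\lambda^*=\wt{\Theta}(n^{-2/3})$ makes this $o(n)$, hence the advantage is $\le1/\polylog(n)$. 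The same bound for the noisy pair is automatic: $M\mapsto(1-\eps)M+\eps M_0$ with a fresh $M_0\sim\cQ_n$ is a degree-preserving randomized channel, under which the low-degree advantage cannot increase (law of total variance).

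The distinguisher $\calA$ outputs $1$ iff its input has a strictly positive eigenvalue, so $\Pr_{M\sim\cQ_n}[\calA(M)=0]=1$. On $M'=(1-\eps)M_1+\eps M_0$, use the test vector $v$ equal to the unit eigenvector of $M_1\sim\cP_n$ with eigenvalue $\lambda^*$; this is a uniformly random direction independent of $M_0$, and $v^\top M'v=(1-\eps)\lambda^*+\eps\,v^\top M_0v$. A Chernoff bound over the i.i.d.\ $\lambda_i\sim\mu$ gives $\tfrac1n\tr(M_0)=-\Theta(p)$ up to fluctuations $\ll\lambda^*$, and Hanson--Wright (using $\|M_0\|_{\mathrm{op}}\le1$ and $\|M_0\|_F^2\le\mathrm{rank}(M_0)=\wt{O}(\lambda^* n)$) gives $|v^\top M_0v-\tfrac1n\tr(M_0)|\le\wt{O}(\sqrt{\lambda^*/n})$ with probability $1-n^{-\omega(1)}$. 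Since $p=c(\eps)\lambda^*$ with $c(\eps)$ small and $\lambda^*\gg\log^2 n/n$, this forces $v^\top M'v>0$, so $\calA(M')=1$ with probability $1-1/\poly(n)$.

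The main obstacle is the estimate $\sum_{k\le D}h_k(\lambda^*)^2\le\poly(D)\,e^{O(D\sqrt{\lambda^*})}$, i.e.\ controlling the orthonormal polynomials of the designed measure $\mu$ at the exterior point $\lambda^*$ --- which is precisely why $\mu$ must be chosen so that its orthonormal polynomials are (nearly) explicit and well-behaved on $[-1,0]$, and why the mass point at $0$ (rather than a continuous piece touching $0$) plus a negative-semidefinite support are both needed. The value of the threshold $D\approx n^{1/3}$ then emerges from balancing the two-sided constraint on $\lambda^*$: the low-degree bound needs $D\sqrt{\lambda^*}=O(\log n)$ while the distinguisher's concentration needs $\lambda^*\gg\log^2 n/n$, both subject to $p\ll\lambda^*$ (so that the noise term $\eps M_0$ cannot cancel the planted spike).
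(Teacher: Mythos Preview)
Your construction, the reduction to symmetric polynomials in the eigenvalues (and then to single-variable terms), and the top-eigenvalue distinguisher with its Hanson--Wright analysis all match the paper's proof. The gap is exactly where you flag it: the bound $|h_k(\lambda^*)|\le\poly(k)(1+\Theta(\sqrt{\lambda^*}))^k$ via Bernstein--Walsh fails. Bernstein--Walsh needs $\max_{[-1,0]}|h_k|\le\poly(k)$ uniformly in $n$, but for $\mu=(1-p)\delta_0+p\nu$ the orthonormal polynomials of $\mu$ are \emph{not} bounded on $[-1,0]$ independently of $p$: already $h_1(x)=(x-\E_\mu x)/\sqrt{\Var_\mu x}$ has $|h_1(-1)|=\Theta(1/\sqrt{p})$, since $\Var_\mu x=\Theta(p)$. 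The resulting Bernstein--Walsh estimate is then off by roughly $1/p$ (in truth $|h_1(\lambda^*)|=\Theta(\sqrt{p})$, whereas your route gives $\Theta(1/\sqrt{p})$), and no choice of $D,\lambda^*,p$ compatible with the distinguisher closes this. The point is that $h_k$ is small near the heavy atom at $0$ and large away from it; extrapolating from the global maximum on $[-1,0]$ throws this away.

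The paper sidesteps the orthonormal polynomials of $\mu$ entirely. It expands $q$ in the shifted Legendre basis $\wt L_k$ for $\nu=\Unif([-1,0])$ and splits the atom off by writing $\E_\mu[q^2]=(1-\gamma)q(0)^2+\gamma\,\E_\nu[q^2]$; the unit-variance constraint then yields the two separate bounds $q(0)^2\le 1/((1-\gamma)m)$ and $\sum_k c_k^2/(2k+1)\le 1/(\gamma m)$. For the planted mean $q(\lambda^*)$ it uses the elementary first-order estimate $\wt L_k(\lambda^*)\le \wt L_k(0)+2k(k+1)\lambda^*$ (valid once $\lambda^*\le 1/(2k(k+1))$) together with Cauchy--Schwarz, giving advantage $O(n^{-1/2})+O(D^3\sqrt{\gamma\log n/n})$. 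This dictates the much smaller choice $\lambda^*=\gamma\log n=\wt\Theta(1/n)$ rather than your $n^{-2/3}$, and the threshold $D\approx n^{1/3}$ emerges from $D^3\sqrt{\gamma/n}=O(1)$, not from $D\sqrt{\lambda^*}=O(1)$.
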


The distributions are defined according to the following eigenvalue distribution.

\begin{definition}[Eigenvalue distribution $\mu_\gamma$] \label{def:eigenvalue-dist}
    Given parameter $\gamma \in (0,1)$, we define $\mu_{\gamma}$ to be the univariate distribution where $x \sim \Unif([-1,0])$ with probability $\gamma$ and $x=0$ otherwise.
\end{definition}

Next, we define the null and planted distributions according to $\mu_{\gamma}$.
Both are mostly supported on matrices of rank $\approx \gamma m \ll n$.
Moreover, matrices sampled from the null model are negative semidefinite, while those sampled from the planted model have exactly one positive eigenvalue (with high probability).

\begin{definition}[Null distribution $\cQ_n^{(\gamma,m)}$] \label{def:null-ev}
    Given parameters $\gamma \in (0,1)$ and $m, n\in \N$,
    the distribution $\cQ_n^{(\gamma, m)}$ is sampled as follows:
    \begin{enumerate}[(1)]
        \item Sample $\lambda_1,\lambda_2,\dots, \lambda_m \sim \mu_{\gamma}$ independently.
        \item Sample a random matrix $U \in \R^{n\times m}$ with i.i.d.\ $\calN(0,1)$ entries.
        \item Output $M = U \diag(\lambda)U^\top$.
    \end{enumerate}
\end{definition}

\begin{definition}[Planted distribution $\cP_n^{(\gamma,m,\lambda^*)}$] \label{def:planted-ev}
    Given parameters $\gamma \in (0,1)$, $m, n\in \N$, and $\lambda^* > 0$,
    the distribution $\cP_n$ is sampled as follows:
    \begin{enumerate}[(1)]
        \item Sample $\lambda_1,\lambda_2,\dots, \lambda_{m-1} \sim \mu_{\gamma}$ independently, and set $\lambda_m = \lambda^*$.
        \item Sample a random matrix $U \in \R^{n\times m}$ with i.i.d.\ $\calN(0,1)$ entries.
        \item Output $M = U \diag(\lambda)U^\top$.
    \end{enumerate}
\end{definition}

For simplicity of notation, we will drop the dependence on $\gamma, m, \lambda^*$ in the subsequent sections.
The two statements in \Cref{thm:eigenvalue} are proved in \Cref{lem:ev-ldlr,lem:ev-distinguisher} respectively.
The final proof (which is simply a combination of the two lemmas) are given in \Cref{sec:proof-of-ev-theorem}, where we set parameters $m = \Theta(n)$, $\gamma = \frac{\log^2 n}{n}$ and $\lambda^* = \gamma\log n$.

Our proofs also prove the following statement as an immediate corollary:

\begin{corollary}[Low-degree indistinguishability under noise]
\label{cor:null-null}
    Let $\cQ_n'$ be the distribution of $\frac{1}{2} M_0 + \frac{1}{2} M_0'$ where $M_0, M_0' \sim \cQ_n$, and let $\cP_n'$ be the distribution of $\frac{1}{2} M_0 + \frac{1}{2} M_1$ where $M_0 \sim \cQ_n$ and $M_1 \sim \cP_n$.
    Then, the two statements of \Cref{thm:eigenvalue} also hold for $\cQ_n'$ and $\cP_n'$.
\end{corollary}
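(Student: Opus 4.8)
The plan is to \emph{reduce both parts to \Cref{lem:ev-ldlr} and \Cref{lem:ev-distinguisher}}, rather than rerun any of their analyses. The key structural fact is that an average of two independent samples of the eigenvalue model is again a rescaled sample of the same model, with the bulk sizes added: if $\hat M = \hat U\diag(\hat\lambda)\hat U^\top$ and $\check M = \check U\diag(\check\lambda)\check U^\top$ are independent, then $\tfrac12(\hat M + \check M) = \tfrac12\,\tilde U\diag(\tilde\lambda)\tilde U^\top$ with $\tilde U = [\hat U \mid \check U]$ having i.i.d.\ $\calN(0,1)$ entries and $\tilde\lambda = (\hat\lambda,\check\lambda)$. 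Consequently $\cQ_n'$ is exactly the law of $\tfrac12 M$ for $M \sim \cQ_n^{(\gamma,2m)}$, and $\cP_n'$ is exactly the law of $\tfrac12 M$ for $M \sim \cP_n^{(\gamma,2m,\lambda^*)}$: the bulk vectors of lengths $m$ and $m-1$ concatenate into $2m-1$ i.i.d.\ $\mu_\gamma$ coordinates, together with the single planted coordinate $\lambda^*$.

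For the low-degree part, the low-degree advantage is invariant under a global rescaling of the ambient space --- a degree-$D$ polynomial of $M/2$ is a degree-$D$ polynomial of $M$ --- so $\LDLR{D}{\cP_n'}{\cQ_n'} = \LDLR{D}{\cP_n^{(\gamma,2m,\lambda^*)}}{\cQ_n^{(\gamma,2m)}}$, and \Cref{lem:ev-ldlr} with $2m$ in place of $m$ yields the claimed bound. Since the parameter regime of \Cref{sec:proof-of-ev-theorem} has $m=\Theta(n)$, the substitution $m\mapsto 2m$ leaves it in the same regime; the only thing to check is that the bound of \Cref{lem:ev-ldlr} degrades at most polynomially in the bulk size, which it does.

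For the distinguisher, the algorithm is unchanged: output $1$ iff the input has a strictly positive eigenvalue. Completeness holds trivially, since a sample from $\cQ_n'$ is an average of two negative semidefinite matrices, hence negative semidefinite. For soundness, write the noisy planted matrix $M' = (1-\eps)M_1' + \eps M_0''$ (with $M_1'\sim\cP_n'$, $M_0''\sim\cQ_n'$) in the form $M' = \tfrac{1-\eps}{2}M_1 + N$, where $M_1\sim\cP_n$ and $N = \tfrac{1-\eps}{2}M_0 + \tfrac{\eps}{2}(M_0^{(a)}+M_0^{(b)})$ is a fixed-coefficient nonnegative combination of three independent $\cQ_n$-samples; in particular $N\preceq 0$ and $N$ is independent of the planted Gaussian column $g = Ue_m$ of $M_1$. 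Following the quadratic-form estimate in the proof of \Cref{lem:ev-distinguisher} with the unit vector $u = g/\|g\|$, one gets $u^\top M_1 u = \lambda^*\|g\|^2 + \sum_{i<m}\lambda_i\langle u, Ue_i\rangle^2 \ge \tfrac12\lambda^* n - O(\gamma m + \log^2 n) \ge \tfrac14\lambda^* n$ with high probability, using $\|g\|^2 = (1\pm o(1))n$, the independence of $g$ from the other columns so that the $\langle u, Ue_i\rangle$ are i.i.d.\ $\calN(0,1)$, concentration of the resulting weighted $\chi^2$ sum, and the choice $\lambda^* n = \gamma n\log n \gg \gamma m$. The same concentration bound, applied to each of the three null summands of $N$ (all independent of $u$), gives $|u^\top N u| \le O(\gamma m + \log^2 n)$ with high probability. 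Hence $\lambda_{\max}(M') \ge u^\top M' u \ge \tfrac{1-\eps}{2}\cdot\tfrac14\lambda^* n - O(\gamma m + \log^2 n) > 0$ with high probability, since $1-\eps$ is a fixed positive constant and $\lambda^* n = \log^3 n$ dominates $\gamma m = \Theta(\log^2 n)$; so $\calA(M') = 1$ with probability $1 - 1/\poly(n)$.

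I expect no substantive obstacle; the only work beyond quoting \Cref{lem:ev-ldlr,lem:ev-distinguisher} is the bookkeeping above, namely (i) the additivity of bulk eigenvalues under averaging together with the scale-invariance of the LDA, and (ii) checking that replacing the single noise term $\eps M_0$ of \Cref{lem:ev-distinguisher} by the convex combination $N$ of three independent null samples does not change the argument --- which it does not, because each summand of $N$ is negative semidefinite and independent of $u$, so its quadratic form against $u$ remains $O(\gamma m + \log^2 n)$ in absolute value.
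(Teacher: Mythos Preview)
Your proposal is correct and rests on the same structural observation the paper uses: since $\tfrac12(\hat U\diag(\hat\lambda)\hat U^\top + \check U\diag(\check\lambda)\check U^\top) = \tfrac12[\hat U\mid\check U]\diag(\hat\lambda,\check\lambda)[\hat U\mid\check U]^\top$, the averaged distributions $\cQ_n'$ and $\cP_n'$ are (up to the harmless global factor $\tfrac12$) exactly $\cQ_n^{(\gamma,2m)}$ and $\cP_n^{(\gamma,2m,\lambda^*)}$. The paper's proof is literally that one sentence, after which both \Cref{lem:ev-ldlr} and \Cref{lem:ev-distinguisher} apply verbatim with $2m$ in place of $m$; in particular the noisy planted sample $(1-\eps)M_1'+\eps M_0''$ is itself $\tfrac12$ times a sample of the form $(1-\eps)\tilde M_1+\eps\tilde M_0$ with $\tilde M_1\sim\cP_n^{(\gamma,2m,\lambda^*)}$ and $\tilde M_0\sim\cQ_n^{(\gamma,2m)}$, so \Cref{lem:ev-distinguisher} applies directly. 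You take this route for the LDA part, but for the distinguisher you instead decompose $M'$ into $\tfrac{1-\eps}{2}M_1$ plus a convex combination of three independent null samples and rerun the Hanson--Wright bound on each. That also works and is fine, but it is more labor than necessary: having already established the $2m$ identification, you could simply invoke \Cref{lem:ev-distinguisher} with $2m$ and skip the three-term quadratic-form bookkeeping entirely.
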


\subsection{Efficient Distinguishing Algorithm}

We first need the well-known Hanson-Wright inequality \cite{MR279864-Hanson71, MR353419-Wright73} for the concentration of Gaussian quadratic forms (see also \cite{MR3125258-Rudelson13}).
\begin{fact}[Hanson-Wright inequality]
\label{fact:hanson-wright}
    Let $A \in \R^{n \times n}$ be a fixed matrix, and let $g\sim \calN(0, \Id_n)$.
    Then, there is a constant $c > 0$ such that for all $t > 0$,
    \begin{equation*}
        \Pr\bracks*{ \abs*{g^\top A g - \tr(A)} \geq t}
        \leq 2\exp\parens*{-c \cdot \min\parens*{\frac{t^2}{\|A\|_F^2}, \frac{t}{\|A\|_2}}} \mper
    \end{equation*}
\end{fact}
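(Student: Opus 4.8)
\emph{Proof idea.} The plan is to prove this along the standard route for quadratic forms in Gaussians: reduce to a weighted sum of independent centered $\chi^2_1$ variables, control the moment generating function (MGF) of each summand in the sub-exponential regime, and then run a Chernoff argument with a constraint on the MGF parameter.

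First I would note that $g^\top A g = g^\top \frac{A+A^\top}{2} g$, that $\tr\parens*{\frac{A+A^\top}{2}} = \tr(A)$, and that $\norm{\frac{A+A^\top}{2}}_F \le \|A\|_F$ and $\norm{\frac{A+A^\top}{2}}_2 \le \|A\|_2$ by the triangle inequality; hence it suffices to prove the bound for symmetric $A$. Writing the spectral decomposition $A = \sum_{i=1}^n \lambda_i u_i u_i^\top$ with $\{u_i\}$ orthonormal and using rotational invariance of $\calN(0,\Id_n)$, the scalars $Z_i := u_i^\top g$ are i.i.d.\ $\calN(0,1)$, so that $g^\top A g - \tr(A) = \sum_{i=1}^n \lambda_i (Z_i^2 - 1)$, a sum of independent centered sub-exponential variables with $\sum_i \lambda_i^2 = \|A\|_F^2$ and $\max_i |\lambda_i| = \|A\|_2$.

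Next I would compute, for $|s\lambda| < 1/2$, $\log \E\bracks*{e^{s\lambda(Z^2-1)}} = -s\lambda - \frac12 \log(1 - 2s\lambda)$, and bound this quantity by $2 s^2 \lambda^2$ whenever $|s\lambda| \le 1/4$ (for instance by integrating its derivative $\frac{2s\lambda}{1-2s\lambda}$ in the variable $s\lambda$). By independence this gives $\E\bracks*{e^{s(g^\top A g - \tr(A))}} \le \exp\parens*{2 s^2 \|A\|_F^2}$ for all $|s| \le \frac{1}{4\|A\|_2}$. A Chernoff bound then yields $\Pr\bracks*{g^\top A g - \tr(A) \ge t} \le \exp\parens*{-st + 2 s^2\|A\|_F^2}$ on that range of $s$; optimizing over $s$ gives the sub-Gaussian tail $\exp\parens*{-t^2/(8\|A\|_F^2)}$ when the unconstrained optimizer $s = t/(4\|A\|_F^2)$ is feasible (i.e.\ $t \le \|A\|_F^2/\|A\|_2$), and otherwise, taking $s = \frac{1}{4\|A\|_2}$, the linear term dominates and one obtains $\exp\parens*{-t/(8\|A\|_2)}$. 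The two cases combine into the claimed minimum of the two exponents; applying the same argument to $-A$ handles the lower tail, and a union bound produces the factor $2$.

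Since this is a textbook inequality included for completeness, I do not expect a genuine obstacle. The only points requiring a little care are the elementary MGF estimate (checking it for both signs of $s\lambda$) and bookkeeping the absolute constants so that a single constant $c$ works uniformly across the two Chernoff regimes.
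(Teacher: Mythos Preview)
Your proof sketch is correct and follows the standard diagonalization-plus-Chernoff route for Gaussian quadratic forms; the MGF bound and the two-regime optimization are handled correctly, and $c=1/8$ works with your constants. The paper itself does not prove this statement at all --- it is stated as a \texttt{fact} with citations to the original Hanson--Wright papers and to Rudelson--Vershynin --- so there is nothing to compare against beyond noting that your argument is the textbook one.
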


We now show that the top eigenvalue distinguishes between $\cQ_n$ and $\cP_n$ with high probability.

\begin{lemma}
\label{lem:ev-distinguisher}
    Fix $\eps \in [0,1)$.
    Let $M_0 \sim \cQ_n$ and $M_1 \sim \cP_n$ sampled independently, and let $M = (1-\eps)M_1 + \eps M_0$.
    Then, for $\gamma \geq \log^2 n/n$, we have $\lambda_1(M) \geq \Omega(\gamma n)$ with probability $1-\exp(-\wt{\Omega}(\gamma n))$.
\end{lemma}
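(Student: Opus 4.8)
The plan is to exhibit an explicit test vector $u$ achieving $u^\top M u \geq \Omega(\gamma n)$, which lower-bounds $\lambda_1(M)$. The natural choice is $u = U e_m$, i.e. the $m$-th column of the Gaussian matrix $U$ that generates $M_1$; this is the ``planted direction'' carrying the positive eigenvalue $\lambda^* > 0$ in $\cP_n$. Writing $M_1 = U\diag(\lambda)U^\top$ with $\lambda_m = \lambda^*$, we get $u^\top M_1 u = \lambda^* \|u\|_2^4 + \sum_{i<m} \lambda_i (u^\top U e_i)^2$. The first term is nonnegative, and we should check it is actually of the right order: $\|u\|_2^2$ concentrates around $n$ by standard $\chi^2$ tails, and $\lambda^* = \gamma\log n$ in the final parameter setting, so $\lambda^*\|u\|_2^4 = \wt\Omega(\gamma n^2)$ — comfortably more than needed. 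For the cross terms, conditioned on $u$, each $u^\top U e_i$ for $i<m$ is $\calN(0,\|u\|_2^2)$ independent across $i$, and the $\lambda_i \leq 0$, so $\sum_{i<m}\lambda_i (u^\top Ue_i)^2$ is a nonpositive quadratic form whose magnitude we must bound from above. By Hanson-Wright (\Cref{fact:hanson-wright}) applied to the diagonal matrix $\|u\|_2^2\cdot\diag(\lambda_1,\dots,\lambda_{m-1})$ (after pulling out a Gaussian vector), this sum is within $\wt O(\sqrt{n}\cdot\|\lambda_{<m}\|_2\cdot\|u\|) = \wt O(n\sqrt{\gamma m})$ of its mean $\|u\|_2^2\sum_{i<m}\lambda_i$, which has magnitude $\Theta(\gamma m n)$ since $\E[\lambda_i] = -\gamma/2$ and there are $m$ of them. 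With $m = \Theta(n)$ this is $\Theta(\gamma n^2)$, i.e. the negative contribution is $\Theta(\gamma n^2)$ in magnitude — the same order as the positive term $\lambda^*\|u\|_2^4 = \wt\Theta(\gamma n^2)$, so we need the $\log n$ factor in $\lambda^*$ to ensure the positive term wins. Actually this is delicate, so I expect the cleaner route is to first handle $M_0$ and then recombine.

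Concretely, for the noise part $\eps M_0$: $u^\top M_0 u = \sum_i \lambda_i' (u^\top U_0 e_i)^2$ where $U_0$ is the Gaussian matrix generating $M_0$, independent of $u$. Conditioned on $u$, each $u^\top U_0 e_i \sim \calN(0,\|u\|_2^2)$, so by Hanson-Wright this concentrates around $\|u\|_2^2 \sum_i \lambda_i' = \|u\|_2^2 \tr(\diag(\lambda'))$, which is $\Theta(\|u\|_2^2 \cdot \gamma m)$ in magnitude (negative), with Hanson-Wright deviation $\wt O(\|u\|_2^2\sqrt{\gamma m})$ — lower order. So $\eps u^\top M_0 u = -\Theta(\eps\gamma m n)$ up to lower-order fluctuations, where we use $\eps < 1$ and $m = \Theta(n)$ to bound this by $O(\gamma n^2)$ in magnitude. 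Combining: $u^\top M u = (1-\eps)u^\top M_1 u + \eps u^\top M_0 u$. The point is that $(1-\eps)\lambda^*\|u\|_2^4 = \wt\Theta(\gamma n^2)$ dominates all the negative terms provided $\lambda^*$ carries a $\polylog(n)$ factor over $\gamma$; but the lemma as stated only concludes $\lambda_1(M) \geq \Omega(\gamma n)$, a much weaker bound. So in fact a cruder argument suffices: it is enough to show $u^\top M u \geq \Omega(\gamma n)\cdot\|u\|_2^2$, i.e. that the Rayleigh quotient $u^\top M u/\|u\|_2^2$ is $\Omega(\gamma n)$. Given $\|u\|_2^2 = \Theta(n)$, we need $u^\top M u \geq \Omega(\gamma n^2)$, and the positive term $(1-\eps)\lambda^*\|u\|_2^4 \asymp \gamma n^2\log n$ beats the negative $O(\gamma n^2)$ terms by a $\log n$ factor. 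I would assemble all the pieces and take a union bound over the $O(1)$ concentration events.

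The main obstacle is bookkeeping the constants carefully enough to see that the single positive direction survives \emph{after} adding $\eps$ fraction of a negative-semidefinite noise matrix: the negative contributions from $M_1$'s other eigendirections and from all of $M_0$ are each of order $\gamma n^2$, the \emph{same} order as the positive term up to the $\log n$ slack, so the argument genuinely uses $\lambda^* = \gamma\log n \gg \gamma$ and $\gamma \geq \log^2 n/n$ (the latter guaranteeing the failure probability $\exp(-\wt\Omega(\gamma n))$ is genuinely small and that the various Hanson-Wright deviations — which scale like $\sqrt{\gamma m}$ relative to means of size $\gamma m$ — are lower order). A secondary subtlety is that $u = Ue_m$ is correlated with $M_1$ (it is literally built from the same Gaussians), so the Hanson-Wright applications to the $M_1$ cross-terms must condition on $u$ and use independence of the \emph{remaining} columns $Ue_i$, $i<m$, from $u$; this is fine since distinct columns of a Gaussian matrix are independent. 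Everything else is routine $\chi^2$ and Hanson-Wright concentration.
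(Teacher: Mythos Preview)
Your plan is correct and essentially identical to the paper's proof: both use the planted column $u = Ue_m$ as the test vector, bound $u^\top(\sum_{i<m}\lambda_i u_iu_i^\top)u$ and $u^\top M_0 u$ by $(1+o(1))\gamma n^2$ via Hanson-Wright, and compare against $(1-\eps)\lambda^*\|u\|_2^4 = \Theta(\gamma n^2\log n)$. The only cosmetic difference is the direction of conditioning in the Hanson-Wright step---the paper fixes $M'=\sum_{i<m}\lambda_i u_iu_i^\top$ and treats $u$ as the Gaussian (bounding $\|M'\|_F^2\leq 2\gamma n^3$, $\|M'\|_2\leq(1+o(1))n$ directly), whereas you fix $u$ and vary the remaining columns through a diagonal matrix---but the resulting bounds coincide.
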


\begin{proof}
    Let $M_1 = U \diag(\lambda) U^\top = \sum_{i=1}^n \lambda_i u_i u_i^\top$, where $\lambda_n = \gamma \log n$ and $\lambda_i \sim \mu$ for $i\leq n-1$.
    First, with probability $1 - \exp\parens{-\wt{\Omega}(n)}$, we have $\|u_i\|_2^2 \in (1\pm o(1)) n$ for all $i\in[n]$.
    Denote $k = \sum_{i=1}^{n-1} |\lambda_i|$ and $W = \{i \in [n-1]: \lambda_i \neq 0\}$.
    By \Cref{def:eigenvalue-dist}, $\E[k] = (n-1) \cdot \frac{1}{2}\gamma$ and $\E[|W|] = (n-1)\gamma$.
    Moreover, by the Chernoff bound, for any $\delta \in (0,1)$,
    \begin{align*}
        &\Pr\bracks*{k \notin (1\pm \delta) \gamma n /2} \leq 2 \exp\parens*{-\delta^2 \gamma n/6} \mcom \qquad 
        \Pr\bracks*{|W| \notin (1\pm \delta) \gamma n} \leq 2 \exp\parens*{-\delta^2 \gamma n/3} \mper
    \end{align*}
    Let $M' = \sum_{i=1}^{n-1} \lambda_i u_i u_i^\top$, which is negative semidefinite as $\mu$ is supported on $[-1,0]$.
    We have that $|\tr(M')| = \sum_{i=1}^{n-1} |\lambda_i| \cdot \|u_i\|_2^2 \leq (1+o(1))\gamma n^2$ and $M'$ has rank $|W| \leq (1+o(1))\gamma n$ with probability $1-\exp(-\wt{\Omega}(\gamma n))$.
    Moreover, conditioned on $\lambda$, $U_W = \{u_i\}_{i\in W}$ is an $n \times |W|$ random matrix with i.i.d.\ Gaussian entries.
    Thus, $\|U_W\|_2 \leq (1+o(1))\sqrt{n}$ with probability $1- \exp(-\wt{\Omega}(n))$, which means that $\|M'\|_2 \leq (1+o(1)) n$.
    In particular, this implies that $\|M'\|_F^2 \leq |W| \cdot \|M'\|_2^2 \leq 2\gamma n^3$.

    Applying the Hanson-Wright inequality (\Cref{fact:hanson-wright}), we have
    \begin{align*}
        \Pr\bracks*{|u_n^\top M' u_n| \geq |\tr(M')| + t} \leq 2\exp\parens*{-c \cdot \min\parens*{\frac{t^2}{\gamma n^3},\ \frac{t}{n} }} \mcom
    \end{align*}
    for some universal constant $c$.
    Setting $t = \gamma n^2/\log n$, it follows that $|u_n^\top M' u_n| \leq (1+o(1))\gamma n^2$ with probability at least $1 - \exp(-\wt{\Omega}(\gamma n))$.

    For $M_0 \sim \cQ_n$, the same calculation shows that $|u_n^\top M_0 u_n| \leq (1+o(1)) \gamma n^2$.
    On the other hand, $\lambda_n \|u_n\|_2^4 \geq  \gamma \log n \cdot (1-o(1))n^2$.
    Thus, for $M = (1-\eps) M_1 + \eps M_0$,
    \begin{align*}
        u_n^\top M u_n
        &\geq (1-\eps) \lambda_n \|u_n\|_2^4 - (1-\eps) |u_n^\top M'u_n| - \eps |u_n^\top M_0 u_n| \\
        &\geq (1-\eps) \cdot (1-o(1)) \gamma n^2 \log n - (1+o(1)) \gamma n^2
        > 0 \mper
    \end{align*}
    Thus, $M$ has a positive eigenvalue with probability at least $1- \exp(-\wt{\Omega}(\gamma n))$.
\end{proof}

\subsection{Low-Degree Indistinguishability}

We first show that we only need to consider low-degree \emph{symmetric} polynomials of the (approximate) eigenvalues.

\begin{lemma} \label{lem:symmetric-poly}
    Let $p$ be a degree-$d$ polynomial in $n^2$ variables, and let $U \in \R^{n \times m}$ be a random matrix with i.i.d.\ $\calN(0,1)$ entries.
    Then, the polynomial $q: \R^{m} \to \R$ defined as $q(\lambda) \coloneqq \E_U[p(U \diag(\lambda) U^\top)]$ has degree $d$ and is a symmetric polynomial in $m$ variables.
\end{lemma}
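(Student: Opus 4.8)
The plan is to verify the two claims — degree $d$ and symmetry — separately, both by reducing to the structure of monomials of $p$. First I would write $p(M) = \sum_{\alpha} c_\alpha \prod_{(a,b)} M_{ab}^{\alpha_{ab}}$ as a sum of monomials in the $n^2$ entries, where each monomial has total degree $\sum_{(a,b)} \alpha_{ab} \le d$. Since $q$ is defined by applying the linear operator $\E_U[\,\cdot\,]$ and $M = U\diag(\lambda)U^\top$ means each entry $M_{ab} = \sum_{i=1}^m U_{ai}\lambda_i U_{bi}$ is \emph{linear} in $\lambda$ (with coefficients that are quadratic monomials in the entries of $U$), a degree-$t$ monomial in the $M_{ab}$'s becomes, after substitution, a polynomial in $\lambda$ of degree exactly $t$ whose coefficients are polynomials of degree $2t$ in the entries of $U$. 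Taking $\E_U$ is a linear operation that only acts on these $U$-coefficients and cannot raise the $\lambda$-degree, so $q(\lambda) = \E_U[p(U\diag(\lambda)U^\top)]$ has degree at most $d$. (There is no claim, nor need, that the degree is exactly $d$; I'd state it as $\le d$, or note that the leading part can only be killed, never created.)

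For symmetry, the key observation is a change-of-variables argument exploiting rotational invariance of the Gaussian ensemble. Fix a permutation $\sigma \in S_m$ and let $P_\sigma \in \R^{m\times m}$ be the corresponding permutation matrix. Then $\diag(\sigma \cdot \lambda) = P_\sigma \diag(\lambda) P_\sigma^\top$, so $U\diag(\sigma\cdot\lambda)U^\top = (UP_\sigma)\diag(\lambda)(UP_\sigma)^\top$. Now $UP_\sigma$ is just $U$ with its columns permuted, and since the entries of $U$ are i.i.d.\ $\calN(0,1)$, the matrix $UP_\sigma$ has exactly the same distribution as $U$. Therefore
\begin{align*}
q(\sigma\cdot\lambda) &= \E_U\bracks*{p\parens*{(UP_\sigma)\diag(\lambda)(UP_\sigma)^\top}} = \E_{U'}\bracks*{p\parens*{U'\diag(\lambda)U'^\top}} = q(\lambda),
\end{align*}
where in the middle step I substituted $U' = UP_\sigma \stackrel{d}{=} U$. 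This holds for every $\sigma$, so $q$ is a symmetric polynomial in $\lambda_1,\dots,\lambda_m$. Combined with the degree bound, this proves the lemma.

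I do not expect a serious obstacle here; this is a structural bookkeeping lemma. The only point that needs a little care is making precise that "each entry of $M$ is linear in $\lambda$" propagates correctly through products and through the expectation — i.e., that $\E_U$ genuinely preserves the $\lambda$-degree rather than merely bounding it in some loose way — but this is immediate once one writes the monomial expansion, since $\E_U$ acts coefficient-wise on the (polynomial-in-$U$) coefficients of the $\lambda$-monomials. A secondary subtlety is whether one wants "degree $d$" literally or "degree at most $d$"; I would phrase the statement and proof in terms of the latter, since that is all that is used in the LDA bound of Lemma~\ref{lem:ev-ldlr}, and it is the only thing that is actually true in general (the expectation can cause cancellations).
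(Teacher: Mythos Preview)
Your proof is correct. For the degree bound you are more explicit than the paper, which simply declares it ``clear''; your remark that only degree \emph{at most} $d$ is guaranteed (the expectation can kill the top part) is accurate and worth keeping.

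For symmetry you take a genuinely different route. You argue by a single distributional change of variables: $\diag(\sigma\cdot\lambda)=P_\sigma\diag(\lambda)P_\sigma^\top$ together with $UP_\sigma\stackrel{d}{=}U$ immediately gives $q(\sigma\cdot\lambda)=q(\lambda)$. The paper instead expands $p(M)=\sum_t\langle C^{(t)},M^{\otimes t}\rangle$, writes each term as $\sum_{i_1,\dots,i_t}\lambda_{i_1}\cdots\lambda_{i_t}\,\langle C^{(t)},\E_U[(u_{i_1}u_{i_1}^\top)\otimes\cdots\otimes(u_{i_t}u_{i_t}^\top)]\rangle$, and observes that the moment tensor in the expectation depends only on the repetition pattern of $(i_1,\dots,i_t)$, then relabels indices. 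Both arguments rest on the same fact---exchangeability of the columns of $U$---but yours uses it at the level of the whole distribution while the paper uses it at the level of mixed moments. Your version is shorter and more conceptual; the paper's explicit expansion is more hands-on but is not actually reused in the subsequent Lemma~\ref{lem:trace-poly}, so nothing is lost by your approach.
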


\begin{proof}
    It is clear that $q$ has degree $d$, thus it suffices to prove that $q$ is symmetric.
    We start by writing $p$ in the monomial basis:
    \begin{equation*}
        p(M) = \sum_{t=0}^d \angles*{C^{(t)}, M^{\otimes t}} \mcom
    \end{equation*}
    where $C^{(t)} \in (\R^{n\times n})^{t}$ are the coefficient tensors.
    Take $M = U \diag(\lambda) U^\top = \sum_{i=1}^{m} \lambda_i u_i u_i^\top$, where $u_1,\dots,u_{m}$ are the columns of $U$.
    Then, for any $t \leq d$,
    \begin{align*}
        \E_U\bracks*{ \angles*{C^{(t)}, M^{\otimes t}}}
        &= \angles*{C^{(t)}, \E_U \parens*{\sum_{i=1}^{m} \lambda_i u_iu_i^\top}^{\otimes t}} \\
        &= \sum_{i_1,\dots,i_t \in [m]} \lambda_{i_1} \cdots \lambda_{i_t} \angles*{C^{(t)}, \E_U \bracks*{(u_{i_1} u_{i_1}^\top) \otimes \cdots \otimes (u_{i_t} u_{i_t}^\top) }} \mper
    \end{align*}
    Observe that since $U$ has i.i.d.\ entries, $\E_U \bracks{(u_{i_1} u_{i_1}^\top) \otimes \cdots \otimes (u_{i_t} u_{i_t}^\top)}$ only depends on the repeating pattern of $(i_1,\dots,i_t)$.
    More specifically, for any permutation $\pi \in \calS_{m}$, we have
    \begin{equation*}
        \E_U \bracks*{(u_{i_1} u_{i_1}^\top) \otimes \cdots \otimes (u_{i_t} u_{i_t}^\top)}
        = \E_U \bracks*{(u_{\pi(i_1)} u_{\pi(i_1)}^\top) \otimes \cdots \otimes (u_{\pi(i_t)} u_{\pi(i_t)}^\top)} \mper
    \end{equation*}
    Thus, for any $\pi \in \calS_{m}$, we have
    \begin{align*}
        q(\lambda_{\pi(1)},\dots, \lambda_{\pi(m)})
        &= \sum_{i_1,\dots,i_t \in [m]} \lambda_{\pi(i_1)} \cdots \lambda_{\pi(i_t)} \angles*{C^{(t)}, \E_U \bracks*{(u_{i_1} u_{i_1}^\top) \otimes \cdots \otimes (u_{i_t} u_{i_t}^\top) }} \\
        &= \sum_{i_1,\dots,i_t \in [m]} \lambda_{\pi(i_1)} \cdots \lambda_{\pi(i_t)} \angles*{C^{(t)}, \E_U \bracks*{(u_{\pi(i_1)} u_{\pi(i_1)}^\top) \otimes \cdots \otimes (u_{\pi(i_t)} u_{\pi(i_t)}^\top)}} \\
        &= q(\lambda) \mcom
    \end{align*}
    which proves that $q$ is symmetric.
\end{proof}

The next lemma shows that given our null and planted models (\Cref{def:null-ev,def:planted-ev}), we can further assume that the polynomial is of the form $\sum_{i=1}^m q(\lambda_i)$.

\begin{lemma} \label{lem:trace-poly}
    Let $\nu_{\cQ} = \mu^m$ and $\nu_{\cP} = \mu^{m-1} \times \delta_{\lambda^*}$ (as defined in \Cref{def:planted-ev}).
    For any degree-$d$ polynomial $p$ in $n^2$ variables with $\E_{M\sim \cQ_n}[p(M)] = 0$, there is a degree-$d$ univariate polynomial $q$ such that
    \begin{enumerate}[(1)]
        \item $\E_{\lambda\sim\mu}[q(\lambda)] = 0$.
        \item $\E_{M\sim \cQ_n}[p(M)^2] \geq \E_{\lambda \sim \nu_{\cQ}}[(\sum_{i=1}^m q(\lambda_i))^2]$.
        \item $\E_{M\sim \cP_n}[p(M)] = \E_{\lambda \sim \nu_{\cP}} [\sum_{i=1}^m q(\lambda_i)]$.
    \end{enumerate}
\end{lemma}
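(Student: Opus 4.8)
The plan is to push the problem through two averaging steps: first replace the matrix polynomial $p$ by the symmetric polynomial of the eigenvalues produced by \Cref{lem:symmetric-poly}, and then extract its degree-$1$ Hoeffding (ANOVA) component, which is by construction a function of the form $\sum_i q(\lambda_i)$.

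\emph{Step 1 (reduce to a symmetric polynomial of $\lambda$).} Set $\widetilde q(\lambda) := \E_U[p(U\diag(\lambda)U^\top)]$. By \Cref{lem:symmetric-poly}, $\widetilde q$ is a symmetric polynomial in $m$ variables of degree $d$, and by construction $\E_{\lambda\sim\nu_{\cQ}}[\widetilde q(\lambda)] = \E_{M\sim\cQ_n}[p(M)] = 0$ and $\E_{\lambda\sim\nu_{\cP}}[\widetilde q(\lambda)] = \E_{M\sim\cP_n}[p(M)]$. Moreover, by the law of total variance (conditioning on $\lambda$),
\[
\E_{M\sim\cQ_n}[p(M)^2] \;=\; \Var_{M\sim\cQ_n}[p(M)] \;\ge\; \Var_{\lambda\sim\nu_{\cQ}}[\widetilde q(\lambda)] \;=\; \E_{\lambda\sim\nu_{\cQ}}[\widetilde q(\lambda)^2] .
\]
So it suffices to produce a univariate degree-$d$ polynomial $q$ with $\E_{\mu}[q]=0$, with $\E_{\nu_{\cQ}}[\widetilde q^2]\ge \E_{\nu_{\cQ}}[(\sum_{i} q(\lambda_i))^2]$, and with $\E_{\nu_{\cP}}[\widetilde q] = \E_{\nu_{\cP}}[\sum_i q(\lambda_i)]$.

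\emph{Step 2 (take the degree-$1$ Hoeffding component).} Since $\nu_{\cQ}=\mu^m$ has i.i.d.\ coordinates, write the Hoeffding decomposition $\widetilde q = \sum_{S\subseteq[m]} \widetilde q_S$ in $L^2(\mu^m)$, where $\widetilde q_S$ depends only on $\lambda_S$, satisfies $\E_{\lambda_j}[\widetilde q_S]=0$ for every $j\in S$, and the components are pairwise orthogonal. Symmetry of $\widetilde q$ forces $\widetilde q_\emptyset = \E_{\nu_{\cQ}}[\widetilde q]=0$ and makes all $m$ singleton components equal to one common univariate function; define $q(x) := \widetilde q_{\{1\}}(x) = \E_{\lambda_2,\dots,\lambda_m\sim\mu}\bigl[\widetilde q(x,\lambda_2,\dots,\lambda_m)\bigr]$. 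Averaging out variables of a degree-$d$ polynomial cannot raise its degree in any single variable, so $q$ has degree $\le d$, and $\E_\mu[q]=\E_{\nu_{\cQ}}[\widetilde q]=0$, giving (1).

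\emph{Step 3 (verify (2) and (3)).} For (2), orthogonality and symmetry give $\E_{\nu_{\cQ}}[\widetilde q^2]=\sum_S \E[\widetilde q_S^2]\ge \sum_{i=1}^m \E[\widetilde q_{\{i\}}^2] = m\,\E_\mu[q^2]$, while $\E_\mu[q]=0$ makes $\sum_i q(\lambda_i)$ a sum of mean-zero i.i.d.\ terms, so $\E_{\nu_{\cQ}}[(\sum_i q(\lambda_i))^2]=m\,\E_\mu[q^2]$; chaining with Step 1 yields even more than (2). For (3), evaluating $\E_{\nu_{\cP}}[\widetilde q]$ integrates $\lambda_1,\dots,\lambda_{m-1}$ over $\mu$ and fixes $\lambda_m=\lambda^*$, so every $\widetilde q_S$ with $S\cap[m-1]\ne\emptyset$ vanishes and only $\widetilde q_\emptyset + \widetilde q_{\{m\}}(\lambda^*) = q(\lambda^*)$ survives; on the other hand $\E_{\nu_{\cP}}[\sum_i q(\lambda_i)] = (m-1)\E_\mu[q] + q(\lambda^*)=q(\lambda^*)$, and both equal $\E_{\cP_n}[p]$ by Step 1. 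I do not expect a real obstacle here: the argument is a two-level conditioning plus the standard ANOVA decomposition, and the only points needing mild care are that the degree-$1$ component is still a degree-$\le d$ polynomial (averaging out coordinates never increases the per-variable degree) and that symmetry of $\widetilde q$ is exactly what collapses the $m$ singleton components into a single univariate $q$; the mixture nature of $\mu$ is harmless since all polynomials lie in $L^2(\mu^m)$, so the decomposition exists and is unique as usual.
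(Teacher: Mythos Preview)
Your proof is correct and takes essentially the same approach as the paper. The paper implements the projection onto additive functions by building an explicit orthonormal basis $\{\psi_k\}$ for $L^2(\mu)$ via Gram--Schmidt on the monomials and keeping only the tensor-product terms with a single nontrivial factor, which is exactly your Hoeffding degree-$1$ component written in coordinates.
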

\begin{proof}
    For both $\cQ_n$ and $\cP_n$, the matrix is sampled to be $U \diag(\lambda) U^\top$ where $U \in \R^{n\times m}$ is a random Gaussian matrix and $\lambda \in \R^m$ is sampled from either $\nu_{\cQ}$ or $\nu_{\cP}$.
    Thus, by \Cref{lem:symmetric-poly}, we may consider the degree-$d$ $m$-variate symmetric polynomial $f(\lambda) = \E_U[p(U\diag(\lambda)U^\top)]$, where $f(\lambda)$ and $p(M)$ have the same expectation under both $\cQ_n$ and $\cP_n$,
    and $\E_{\lambda\sim\nu_{\cQ}}[f(\lambda)^2] \leq \E_{M\sim \cQ_n}[p(M)^2]$ by Jensen's inequality.

    We next show that further restricting $f(\lambda)$ to some polynomial of the form $\sum_{i=1}^m q(\lambda_i)$ decreases the variance.
    The distribution $\mu$ defines an inner product $\angles{f,g}_\mu = \E_{x\sim \mu}[f(x)g(x)]$.
    We can perform the Gram-Schmidt process on the monomials $1,x,x^2,\dots$ to obtain an orthonormal basis $\{\psi_i\}_{i\in \N}$ such that
    \begin{itemize}
        \item $\psi_i$ is a polynomial of degree $i$,
        \item $\E_{x\sim\mu}[\psi_i(x)] = 0$ for $i\geq 1$,
        \item $\E_{x\sim \mu}[\psi_i(x) \psi_j(x)] = \1(i=j)$ for all $i,j\geq 0$.
    \end{itemize}
    For example, we have $\psi_0(x)=1$, $\psi_1(x) = c(x- \E_{x\sim\mu}[x])$ (where $c$ is a normalizing constant), and so on.
    Then, the polynomial $f$ can be written as a linear combination of $\prod_{i=1}^m \psi_{\alpha_i}(\lambda_i)$ where $\alpha\in \N^m$ and $\|\alpha\|_1 \leq d$.
    Moreover, since $f$ is symmetric (in $m$ variables), it can be expressed as
    \begin{equation*}
        f(\lambda) = \sum_{\alpha \in A_d} c_\alpha \E_{\pi\sim \calS_m} \bracks*{\prod_{i=1}^m \psi_{\alpha_i}(\lambda_{\pi(i)})} \mcom
    \end{equation*}
    where $A_d \coloneqq \{\alpha\in \N^m: \alpha_1 \geq \alpha_2 \geq \cdots \geq \alpha_m \geq 0, \|\alpha\|_1\leq d\}$.
    
    First, we have $\E_{\lambda \sim \nu_{\cQ}}[f(\lambda)] = c_0 = 0$.
    Moreover, for any $\alpha,\beta \in A_d$, we have
    \begin{equation*}
        \E_{\lambda\sim\nu_{\cQ}} \E_{\pi\sim \calS_m} \E_{\pi'\sim \calS_m} \prod_{i=1}^m \psi_{\alpha_i}(\lambda_{\pi(i)}) \psi_{\beta_i}(\lambda_{\pi'(i)})
        = \begin{cases}
            r_\alpha & \alpha = \beta \mcom \\
            0 & \alpha \neq \beta \mcom
        \end{cases}
    \end{equation*}
    for some $r_\alpha > 0$.
    Therefore, it follows that
    \begin{equation*}
        \E_{\lambda \sim \nu_{\cQ}}[f(\lambda)^2] = \sum_{\alpha\in A_d} c_\alpha^2 \cdot r_\alpha \mper
    \end{equation*}
    On the other hand, for $\nu_{\cP}$, $\E_{\lambda\sim \nu_{\cP}} \prod_{i=1}^m \psi_{\alpha_i}(\lambda_{\pi(i)})$ is nonzero only when $\alpha = (k,0,\dots,0)$ for some $k\in \N$ and $\pi(1) = m$.
    Thus,
    \begin{equation*}
        \E_{\lambda \sim \mu_P}[f(\lambda)] = \sum_{k=1}^d c_{(k,0,\dots,0)} \cdot \frac{1}{m}\psi_k(\lambda^*) \mper
    \end{equation*}
    Thus, denote $b_k \coloneqq c_{(k,0,\dots,0)}$ (with $b_0 = 0$) and define the polynomial $g$ to be
    \begin{equation*}
        g(\lambda) \coloneqq \sum_{k=1}^d b_k \E_{\pi\sim\calS_m}[\psi_k(\lambda_{\pi(1)})] = \frac{1}{m}\sum_{i=1}^m \sum_{k=1}^d b_k \psi_k(\lambda_i) \mcom
    \end{equation*}
    then we have $\E_{\lambda\sim\nu_{\cQ}}[f(\lambda)] = \E_{\lambda\sim\nu_{\cQ}}[g(\lambda)] = 0$ and
    $\E_{\lambda\sim\nu_{\cQ}}[g(\lambda)^2] = \sum_{k=0}^d b_k^2 \cdot r_{(k,0,\dots,0)} \leq \E_{\lambda\sim\nu_{\cQ}}[f(\lambda)^2]$,
    and moreover $\E_{\lambda\sim\nu_{\cP}}[g(\lambda)] = \E_{\lambda\sim\nu_{\cP}}[f(\lambda)] = \frac{1}{m}\sum_{k=1}^d b_k \psi_k(\lambda^*)$.

    Now, define $q$ to be the following degree-$d$ univariate polynomial:
    \begin{equation*}
        q(x) \coloneqq \frac{1}{m}\sum_{k=1}^d b_k \psi_k(x) \mper
    \end{equation*}
    Observe that $g(\lambda) = \sum_{i=1}^n q(\lambda_i)$.
    It follows that $q$ satisfies all $3$ statements of the lemma, completing the proof.
\end{proof}

\paragraph{Legendre polynomials}
The univariate Legendre polynomials $\{L_k\}_{k\in \N}$ are defined by the following recurrence:
\begin{equation*}
    L_0(x) = 1,\quad L_1(x) = x, \quad (k+1)L_{k+1}(x) = (2k+1)xL_k(x) - k L_{k-1}(x) \mcom
\end{equation*}
and they have the following explicit expressions:
\begin{equation*}
    L_k(x) = \sum_{i=0}^k \binom{k}{i} \binom{k+i}{i} \parens*{\frac{x-1}{2}}^i \mper
\end{equation*}
An important property that one can verify is that $L_k(1) = 1$ for all $k\in \N$.
Moreover, the polynomials are orthogonal with respect to the uniform distribution over $[-1,1]$:
\begin{equation*}
    \E_{x\sim \Unif([-1,1])}[L_k(x) L_\ell(x)] = \frac{1}{2k+1} \delta_{k\ell} \mcom
\end{equation*}
where $\delta_{k\ell} = 1$ if $k=\ell$ and $0$ otherwise.

For our convenience, we define the following shifted polynomial:
\begin{align*}
    \wt{L}_k(x) = L_k(2x+1) = \sum_{i=0}^k \binom{k}{i} \binom{k+i}{i} x^i \mper
    \numberthis \label{eq:legendre-explicit}
\end{align*}
which are orthogonal with respect to $\Unif([-1,0])$.

\paragraph{Low-degree indistinguishability}
We now prove the statement that the null and planted distributions are low-degree indistinguishable.

\begin{lemma}
\label{lem:ev-ldlr}
    Let $n,m,d\in \N$ and $\gamma, \lambda^* > 0$ be such that
    $m = \Theta(n)$, $\lambda^* = \gamma \log n \leq \frac{1}{2d(d+1)}$ and $d^3 \sqrt{\gamma \log n/n} \leq o(1)$.
    Then, for any degree-$d$ polynomial $p$ in $n^2$ variables such that $\E_{M\sim \cQ_n}[p(M)] = 0$ and $\E_{M\sim \cQ_n}[p(M)^2] \leq 1$, we have $\E_{M\sim \cP_n}[p(M)] \leq O(d^3 \sqrt{\gamma \log n/n}) + O(n^{-1/2})$.
\end{lemma}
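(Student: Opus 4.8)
The plan is to use the two reductions already established — \Cref{lem:symmetric-poly,lem:trace-poly} — to collapse the statement to a one-variable problem, and then control a univariate polynomial near $0$ via its shifted Legendre expansion. By \Cref{lem:trace-poly} there is a degree-$d$ univariate polynomial $q$ with $\E_{\lambda\sim\mu}[q(\lambda)]=0$, with $\E_{\lambda\sim\nu_{\cQ}}[(\sum_{i=1}^m q(\lambda_i))^2]\le \E_{M\sim\cQ_n}[p(M)^2]\le 1$, and with $\E_{M\sim\cP_n}[p(M)]=\E_{\lambda\sim\nu_{\cP}}[\sum_{i=1}^m q(\lambda_i)]$. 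Since under $\nu_{\cQ}$ the $\lambda_i$ are i.i.d.\ $\mu$ and $\E_\mu[q]=0$, the second fact reads $m\,\E_\mu[q^2]\le 1$, i.e.\ $\E_\mu[q^2]\le 1/m$; since $\nu_{\cP}=\mu^{m-1}\times\delta_{\lambda^*}$ and $\E_\mu[q]=0$, the third fact gives $\E_{\cP_n}[p]=q(\lambda^*)$ exactly. So the whole lemma reduces to: \emph{if $q$ has degree $\le d$, $\E_\mu[q]=0$, and $\E_\mu[q^2]\le 1/m$, then $|q(\lambda^*)|\le O(d^3\sqrt{\gamma\log n/n})+O(n^{-1/2})$.}

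I would then expand $q=\sum_{k=0}^d a_k\wt{L}_k$ in the shifted Legendre basis of \eqref{eq:legendre-explicit}, which is orthogonal under $\Unif([-1,0])$ with $\E_{\Unif([-1,0])}[\wt{L}_k^2]=\tfrac1{2k+1}$ and --- crucially --- satisfies $\wt{L}_k(0)=L_k(1)=1$ for every $k$. Since $\mu$ places mass $\gamma$ on $\Unif([-1,0])$ and mass $1-\gamma$ on the atom $\{0\}$, the constraint $\E_\mu[q^2]\le 1/m$ forces $\sum_k\tfrac{a_k^2}{2k+1}=\E_{\Unif([-1,0])}[q^2]\le\tfrac1{\gamma m}$, while $\E_\mu[q]=0$ reads $\gamma a_0+(1-\gamma)q(0)=0$ (using $q(0)=\sum_k a_k$ and $\E_{\Unif([-1,0])}[q]=a_0$), so $|q(0)|=\tfrac{\gamma}{1-\gamma}|a_0|\le O(\sqrt{\gamma/m})$, which (with $m=\Theta(n)$) is a lower-order term.

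It then remains to bound $q(\lambda^*)-q(0)=\sum_{k=1}^d a_k(\wt{L}_k(\lambda^*)-1)$. The key estimate is that $\wt{L}_k(\lambda^*)-1=\sum_{i=1}^k\binom ki\binom{k+i}i(\lambda^*)^i$ has consecutive-term ratio $\tfrac{(k-i)(k+i+1)}{(i+1)^2}\lambda^*\le k(k+1)\lambda^*\le d(d+1)\lambda^*\le\tfrac12$ by the hypothesis $\lambda^*\le\tfrac1{2d(d+1)}$; summing this geometric tail gives $|\wt{L}_k(\lambda^*)-1|\le 2k(k+1)\lambda^*=O(k^2\lambda^*)$. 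A Cauchy--Schwarz step with weights $2k+1$ then yields
\[
\left|\sum_{k=1}^d a_k(\wt{L}_k(\lambda^*)-1)\right|
\le \sqrt{\sum_{k=1}^d \frac{a_k^2}{2k+1}}\cdot\sqrt{\sum_{k=1}^d (2k+1)(\wt{L}_k(\lambda^*)-1)^2}
\le \frac{1}{\sqrt{\gamma m}}\cdot O(\lambda^* d^3),
\]
using $\sum_{k=1}^d (2k+1)k^4=O(d^6)$. Substituting $\lambda^*=\gamma\log n$ and $m=\Theta(n)$ turns this into $O(d^3\sqrt{\gamma\log n/n})$, and adding the $O(\sqrt{\gamma/n})=o(n^{-1/2})$ bound on $|q(0)|$ completes the proof; the hypothesis $d^3\sqrt{\gamma\log n/n}=o(1)$ is exactly what keeps the bound meaningful.

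The main obstacle --- essentially the only step with real content --- is the Legendre estimate $|\wt{L}_k(\lambda^*)-1|=O(k^2\lambda^*)$: one needs the positive spike $\lambda^*$ to be small enough, quantitatively $\lambda^*=O(1/d^2)$, that the Taylor tail of $\wt{L}_k$ around $0$ is dominated by its linear term $k(k+1)\lambda^*$; this is precisely why the planted eigenvalue must be taken polynomially small, and it is the numerical role of the hypothesis $\lambda^*\le\tfrac1{2d(d+1)}$. The two reductions, the translation of the moment constraints into constraints on the Legendre coefficients, and the final Cauchy--Schwarz are all routine bookkeeping.
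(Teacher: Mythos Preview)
Your proof is correct and follows essentially the same route as the paper: reduce via \Cref{lem:trace-poly} to a univariate $q$, expand in the shifted Legendre basis, extract the bounds $\sum_k a_k^2/(2k+1)\le 1/(\gamma m)$ and a bound on $|q(0)|$ from the second-moment constraint, control $\wt{L}_k(\lambda^*)-1$ by the geometric-tail argument, and finish with the same weighted Cauchy--Schwarz. The only (minor, cosmetic) difference is that you bound $|q(0)|$ through the mean-zero identity $q(0)=-\tfrac{\gamma}{1-\gamma}a_0$, obtaining $O(\sqrt{\gamma/m})$, whereas the paper reads off $|q(0)|\le 1/\sqrt{(1-\gamma)m}$ directly from $(1-\gamma)q(0)^2\le \E_\mu[q^2]$; your bound is sharper but both are absorbed into the stated $O(n^{-1/2})$.
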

\begin{proof}
    By \Cref{lem:trace-poly}, we only need to consider polynomials of the form $\sum_{i=1}^m q(\lambda_i)$, where $q$ is a univariate polynomial of degree $d$, and $\lambda$ is sampled from either $\mu^m$ or $\mu^{m-1} \times \delta_{\lambda^*}$.

    First, we write $q$ in terms of the shifted Legendre polynomials (\Cref{eq:legendre-explicit}): $q(x) = \sum_{k=0}^d c_k \wt{L}_k(x)$.
    For the null model $\cQ_n$, we have $\E_{M\sim \cQ_n}[p(M)] = 0$ implies that $\E_{x\sim \mu}[q(x)] = 0$.
    Next, we have $1 \geq \E_{M\sim \cQ_n}[p(M)^2] \geq \E_{\lambda \sim \mu^m}[(\sum_{i=1}^m q(\lambda_i))^2] = m \cdot \E_{x\sim\mu}[q(x)^2]$, and
    \begin{align*}
        \E_{x\sim \mu}[q(x)^2]
        &= (1-\gamma)\cdot q(0)^2 + \gamma \cdot \E_{x\sim \Unif([-1,0])}[q(x)^2] \\
        &= (1-\gamma) \cdot q(0)^2 + \gamma \sum_{k=0}^d c_k^2 \cdot \frac{1}{2k+1} \mcom
    \end{align*}
    where the last equality follows from the orthogonality of $\wt{L}_k$ under $\Unif([-1,0])$.
    Thus,
    \begin{equation*}
        (1-\gamma)q(0)^2 + \gamma \sum_{k=0}^d \frac{c_k^2}{2k+1} \leq \frac{1}{m} \mper
        \numberthis \label{eq:squared-leq-1/n}
    \end{equation*}
    This implies that $|q(0)| \leq \frac{1}{\sqrt{(1-\gamma)m}}$ and $\sum_{k=0}^d \frac{c_k^2}{2k+1} \leq \frac{1}{\gamma m}$.

    Next, for the planted model $\cP_n$, we have $\lambda_1,\dots, \lambda_{m-1} \sim \mu$ and $\lambda_m = \lambda^*$.
    Let $\nu_{\cP} \coloneqq \mu^{m-1} \times \delta_{\lambda^*}$.
    Then, 
    \begin{align*}
        \E_{\lambda \sim \nu_{\cP}}\sum_{i=1}^m q(\lambda_i) = \sum_{i=1}^{m-1} \E_{\lambda\sim\mu}[q(\lambda_i)] + q(\lambda^*)
        = q(\lambda^*) = \sum_{k=0}^d c_k \wt{L}_k(\lambda^*) \mper
    \end{align*}
    By \Cref{eq:legendre-explicit},
    \begin{equation*}
        \wt{L}_k(\lambda^*) = \sum_{i=0}^k \binom{k}{i} \binom{k+i}{i} (\lambda^*)^i
        \leq 1 + \sum_{i=1}^k \parens*{k(k+1)\lambda^*}^i
        \leq \wt{L}_k(0) + 2k(k+1)\lambda^* \mcom
    \end{equation*}
    as long as $\lambda^* \leq \frac{1}{2k(k+1)}$ (for $k\geq 1$).
    Here, we also use that $\wt{L}_k(0) = L_k(1) = 1$.
    Thus,
    \begin{align*}
        q(\lambda^*) &\leq q(0) + \lambda^* \sum_{k=0}^d c_k \cdot k(k+1)
        \leq q(0) + \lambda^* \sqrt{\sum_{k=0}^d \frac{c_k^2}{2k+1}} \sqrt{\sum_{k=0}^d (2k+1)k^2(k+1)^2} \\
        &\leq \frac{1}{\sqrt{(1-\gamma) m}} + \lambda^*  \sqrt{\frac{1}{\gamma m}} \cdot O(d^3) \mcom
    \end{align*}
    where the last inequality follows from \Cref{eq:squared-leq-1/n}.
    Suppose $\lambda^* = \gamma \log n$ and $m = \Theta(n)$,
    then the above is at most $O(n^{-1/2}) + O(d^3\sqrt{\gamma \log n / n})$.
    This completes the proof.
\end{proof}

\subsection{Finishing the Proof}
\label{sec:proof-of-ev-theorem}

We prove \Cref{thm:eigenvalue} by combining \Cref{lem:ev-ldlr,lem:ev-distinguisher}.

\begin{proof}[Proof of \Cref{thm:eigenvalue}]
    For our null and planted distributions (\Cref{def:null-ev,def:planted-ev}), we set
    $m = \Theta(n)$, $\gamma = \frac{C\log^2 n}{n}$ and $\lambda^* = \gamma\log n$ for some large constant $C > 1$.
    
    The distinguishing algorithm simply checks whether the input matrix has a positive eigenvalue or not.
    For $M \sim \cQ_n$, $M$ is negative semidefinite (with probability $1$).
    For $M \sim \cP_n$, by \Cref{lem:ev-distinguisher}, $\lambda_1(M) > \Omega(\gamma n)$ with probability at least $1 - \frac{1}{\poly(n)}$.
    
    On the other hand, let $D = n^{1/3}/\polylog(n)$, which satisfies the conditions $\lambda^* \leq o(D^{-2})$ and $D^3 \sqrt{\gamma \log n /n} \leq 1/\polylog(n)$ required in \Cref{lem:ev-ldlr}.
    We have that $\LDLR{D}{\cP_n}{\cQ_n} \leq 1/\polylog(n)$.
\end{proof}

\begin{proof}[Proof of \Cref{cor:null-null}]
    The proof follows by observing that $M_0 + M_0'$ is distributed as $\cQ_n^{(\gamma,2m)}$ while $M_0 + M_1$ is distributed as $\cP_n^{(\gamma,2m,\lambda^*)}$.
\end{proof}

\section*{Acknowledgments}
The last-named author thanks Afonso Bandeira for insightful discussions on a previous attempt to disprove the low-degree conjecture, and Alex Wein for detailed comments on a previous version of this paper. We also thank Tselil Schramm and Sam Hopkins for insightful comments on this work, and Jeff Xu for related discussions over the past year. We also thank the researchers at the AIM workshop on low-degree polynomial methods in average-case complexity for insightful comments on the second counterexample presented in this work.

\bibliographystyle{alpha}
\bibliography{main}

\end{document}